\newif\iffullversion
\pgfplotsset{compat=1.5} 
\newtheorem{theorem}{Theorem}
\newcommand{\PasswordSpace}{\mathcal{P}}
\newcommand{\cut}[1]{}
\def \QED {\hfill{$\Box$}}
\newenvironment{proofof}[1]{\noindent {\em Proof of #1.  }}{\QED}
\newenvironment{remindertheorem}[1]{\medskip \noindent {\bf Reminder of Theorem #1.  }\em}{}
\newcommand{\Hash}{\mathbf{H}}
\newcommand{\E}{\mathbb{E}}
\newcommand{\Cost}[1]{\mathbf{Cost}\left(#1 \right)}
\newcommand{\CostH}{\Cost{\Hash}}
\newcommand{\CostHk}{\Cost{\Hash^k}}
\newcommand{\PAdvDet}[2]{\mathcal{P}_{ADV,#1,#2}^{det}}
\newcommand{\PAdvUnif}[2]{\mathcal{P}_{ADV,#1,#2}^{pepper}}
\newcommand{\PAdvCASH}[3]{\mathcal{P}_{ADV,#1,#2,#3}^{CASH}}
\newcommand{\UAdvDet}[3]{\mathbf{U}_{ADV}^{det}\left(#1,#2,#3 \right)}
\newcommand{\UAdvCASH}[2]{\mathbf{U}_{ADV}^{CASH}\left(#1,#2 \right)}
\newcommand{\ServerCost}[1]{C_{SRV,#1}}
\newcommand{\FullVersion}[2]{#2}
\begin{document}
\raggedbottom

\title{ CASH: A Cost Asymmetric Secure Hash Algorithm for Optimal Password Protection}


\author{Jeremiah Blocki \\
Microsoft Research 
\and
Anupam Datta\\ Carnegie Mellon University
} 


\maketitle
\begin{abstract}
An adversary who has obtained the cryptographic hash of a user's password can mount an offline attack to crack the password by comparing this hash value with the cryptographic hashes of likely password guesses. This offline attacker is limited only by the resources he is willing to invest to crack the password. Key-stretching techniques like hash iteration and memory hard functions have been proposed to mitigate the threat of offline attacks by making each password guess more expensive for the adversary to verify. However, these techniques also increase costs for a legitimate authentication server. We introduce a novel Stackelberg game model which captures the essential elements of this interaction between a defender and an offline attacker. In the game the defender first commits to a key-stretching mechanism, and the offline attacker responds in a manner that optimizes his utility (expected reward minus expected guessing costs). We then introduce Cost Asymmetric Secure Hash (CASH), a randomized key-stretching mechanism that minimizes the fraction of passwords that would be cracked by a rational offline attacker without increasing amortized authentication costs for the legitimate authentication server. CASH is motivated by the observation that the legitimate authentication server will typically run the authentication procedure to verify a correct password, while an offline adversary will typically use incorrect password guesses. By using randomization we can ensure that the amortized cost of running CASH to verify a correct password guess is significantly smaller than the cost of rejecting an incorrect password.  Using our Stackelberg game framework we can quantify the quality of the underlying CASH running time distribution in terms of the fraction of passwords that a rational offline adversary would crack. We provide an efficient algorithm to compute high quality CASH distributions for the defender. Finally, we analyze CASH using empirical data from two large scale password frequency datasets. Our analysis shows that CASH can significantly reduce (up to $50\%$) the fraction of password cracked by a rational offline adversary.

\end{abstract}





\section{Introduction} \label{sec:Introduction}
In recent years the authentication servers at major companies like eBay, Zappos, Sony, LinkedIn and Adobe~\FullVersion{\footnote{For example, see \url{ http://www.privacyrights.org/data-breach/} (Retrieved 9/1/2015).}}{\cite{breach:ebay,breach:Zappos,breach:sony,breach:linkedin,breach:IEEE,breach:Adobe}} have been breached. These breaches have resulted in the release of the cryptographic hashes of millions of user passwords,  each of which has significant economic value to adversaries~\cite{passwordBlackMarket,zonenberg2009distributed}. An adversary who has obtained the cryptographic hash of a user's password can mount a fully automated attack to crack the user's password by comparing this hash value to the cryptographic hashes of likely password guesses~\cite{JTR}. This offline attacker can try as many password guesses as he likes; he is only limited by the resources that he is willing to invest to crack the password. 

Offline attacks are becoming increasingly dangerous due to a combination of several different factors. First, improvements in computing hardware make password cracking cheaper (e.g., ~\cite{zonenberg2009distributed}).  Second, empirical data indicates that many users tend to select low entropy passwords~\cite{mostPopularPasswords2012,bonneau2012science,seeley1989password}.  Finally, offline adversaries now have a wealth of training data available from previous password breaches~\cite{PasswordCrackingArticle} so the adversary often has very accurate background knowledge about the structure of popular passwords. 

Password hash functions like PBKDF2\cite{kaliski2000pkcs}, BCRYPT\cite{bcrypt}, Argon2\cite{Argon2} and SCRYPT\cite{percival2012scrypt} employ key-stretching~\cite{morris1979password} to make it more expensive for an offline adversary to crack a hashed password. While key-stretching may reduce the number of password guesses that the adversary is able to try, the legitimate authentication server faces a basic trade-off: he must also pay an increased cost every time a user authenticates.

The basic observation behind our work is that it is possible for the legitimate authentication server to use randomization to gain an advantage in this cat-and-mouse game. The offline adversary will spend most of his time guessing incorrect passwords, while the authentication server will primarily authenticate users with correct passwords. Therefore, it would be desirable to have an authentication procedure whose cost is asymmetric. That is the cost of rejecting an incorrect password is greater than the cost of accepting a correct password. This same basic observation lay behind Manber's proposal to use secret salt values (e.g., ``pepper")~\cite{manber1996simple}. For example, the server might store the cryptographic hash $\Hash(pwd,t)$ for a uniformly random value $t \in \{1,\ldots,m\}$ called the ``pepper". An offline adversary will need to compute the hash function $m$ times in total to reject an incorrect password $pwd'$, while the legitimate authentication server will only need to compute it $\frac{m+1}{2}$ times on average to accept a correct password. 

We introduce Cost Asymmetric Secure Hash (CASH) a mechanism for protecting passwords against offline attacks while minimizing expected costs to the legitimate authentication server. CASH may be viewed as a simple, yet powerful, extension of~\cite{manber1996simple} in which the distribution over $t$ is not-necessarily uniform ---  the ``peppering" idea of Manber~\cite{manber1996simple} is a special case of our mechanism in which the distribution over $t$ is uniform. 

In this paper we seek to address the following questions: How can we quantify the security gains (losses) from the use of secret salt values? What distribution over the secret salt value (t) is optimal for the authentication server? Is there an efficient algorithm to compute this distribution? Does CASH perform better than ``pepper" or deterministic key stretching?

\paragraph{Contributions} We first introduce a Stackelberg (leader-follower) game which captures the essential aspects of our password setting. Our Stackelberg model can provide helpful guidance for the authentication server by predicting whether or not (a particular level of) key-stretching will significantly reduce the number of passwords that would be cracked by a rational offline adversary in the event of a server breach. In our Stackelberg game the authentication server (leader) first commits to a password hashing strategy, and the offline adversary (follower) gets to play his best response to the server's (leader's) action. That is the adversary selects a threshold $B$ and begins guessing passwords until he either 1) cracks the user's password, or 2) gives up after expending $B$ units of work. The adversary will select a threshold $B$ that maximizes his utility.\footnote{Intuitively, the adversary's utility is his expected reward (the value of a cracked password times the probability he cracks it) minus his expected guessing costs (given by the expected number of times that the adversary needs to evaluate the hash function before he succeeds or gives up).}

Next we give an efficient algorithm for computing good strategies for the leader (authentication server) in this Stackelberg game. The defender wants to find a distribution $\tilde{p}_1 \geq \ldots \geq \tilde{p}_m \geq 0$ over the secret running time parameter $t \in \{1,\ldots,m\}$, which minimizes the number of passwords that an offline adversary would crack. When choosing this distribution, the defender is given a constraint (e.g., $\E[t] = \sum_{t=1}^m t\cdot \tilde{p}_t \leq C_{max}$) bounding the server's amortized authentication costs. 

Unfortunately, there are no known polynomial time algorithms to compute the Stackelberg equilibrium of our game as this problem reduces to a non-convex optimization problem.\footnote{By contrast, fixing any CASH distribution $\tilde{p}_i  \geq \ldots \geq  \tilde{p}_m$ it is easy to compute the adversary's best response.} However, we develop an efficient algorithm to solve a closely related goal:  find the  CASH distribution which minimizes the success rate of an adversary with a fixed budget $B$ per user. While this new goal is not equivalent to the Stackelberg equilibrium our experimental results indicate that the resulting CASH distributions translate to good strategies in the original Stackelberg game.  At a technical level we show that this new optimization problem can be expressed as a linear program. The key technical challenge in solving this linear program is that it has exponentially many constraints. Fortunately, this linear program can still be solved in polynomial time using an efficient separation oracle that we develop. We also develop a practical algorithm which can quickly find the (approximately) optimal CASH distribution against a budget $B$ adversary. The algorithm is efficient enough to run on large real world instances (e.g., a dataset of $70$ million passwords). 

Finally, we evaluated CASH using password frequency data from the RockYou password breach and from a (perturbed) dataset of 70 million Yahoo! passwords\cite{bonneau2012science,blocki2016differentially}. Our analysis shows that CASH significantly outperforms the traditional (deterministic) key-stretching defense as well as the ``peppering" defense of \cite{manber1996simple}.  In some instances, CASH reduced the fraction of passwords cracked by a rational adversary by about $50\%$ in comparison to both pepper and traditional key-stretching algorithms.


\section{Background} \label{sec:Background}

Before we introduce the basic CASH mechanism it is necessary to introduce some notation (Section \ref{subsec:Notation}) and review the traditional password based authentication process (Section \ref{subsec:Traditional}).

\subsection{Notation.} \label{subsec:Notation}  We use $\Hash$ to denote a cryptographic hash function and we let $\mathbf{Cost}\left(\mathbf{H} \right)$ denote the cost of evaluating $\mathbf{H}$ one time. To simplify the presentation we will assume that all other costs have been scaled so that $\mathbf{Cost}\left(\mathbf{H} \right)=1$. We use $\Hash^k$ to denote a hash function that is $k$-times as expensive to compute.\footnote{In this work we will not focus on the lower level issue of which key-stretching techniques are used. However, this is an important research area~\cite{PHC} and we would strongly advocate for the use of modern key-stretching techniques like memory hard functions. BCRYPT~\cite{bcrypt} and PBKDF2~\cite{kaliski2000pkcs}, use hash iteration for key-stretching. In this case the cost parameter $k$ specifies the number of hash iterations. For example, if $k=2$ the authentication server would store the tuple $\left(k=2,\Hash\big(\Hash(pwd)\big)\right)$. The disadvantage to this approach is that a hash function $\Hash$ might cost orders of magnitude less to evaluate on an Application Specific Integrated Circuit than it would cost to evaluate on a more traditional architecture.  By contrast, memory costs tend to be relatively stable across different architectures~\cite{DGN03}, which  motivates the use of memory hard functions for password hashing~\cite{Per09}.  Argon2~\cite{Argon2}, winner of the recently completed password hashing competition~\cite{PHC}, and SCRYPT~\cite{percival2012scrypt} use memory hard functions to perform key-stretching. In this paper we will simply use $\Hash^k$ is $k$-times as expensive to compute without worrying about the specific key-stretching techniques that were employed to achieve this property. }  We use $\PasswordSpace$ to denote the space of passwords that users may select, and we use $n$ to denote the number of passwords in this space. We use $p_i$ to denote the probability that a random user selects the password $pwd_i \in \PasswordSpace$. For notational convenience, we assume that the passwords have been sorted so that  $p_1 \geq ... \geq p_n$. Given a set $S$ we will write $x \stackrel{\$}{\gets} S$ to denote a uniformly random sample from the set $S$.  

Table \ref{tab:Notation} contains a summary of the notation used throughout this paper. Some of this notation will be introduced later in the paper when it is first used.

\subsection{Traditional Password Authentication.} \label{subsec:Traditional}
We begin by giving a brief overview of the traditional password authentication process. Suppose that a user registers for an account with username $u$ and password $pwd_u \in \PasswordSpace$. Typically, an authentication server will store a record like the following $\left(u,s_u,k,\mathbf{H}^k\left(pwd_u,s_u \right)\right)$. Here, $s_u \stackrel{\$}{\gets}\{0,1\}^L$ is a random $L$-bit salt~\cite{salt} value used to prevent rainbow table attacks~\cite{rainbowTable}  and the parameter $k$ controls the cost of the hash function. We stress that the salt value $s_u$ and the cost parameter $k$ are stored on the server in the clear so an adversary who breaches the authentication server will learn both of these values. We use the notation $s_u$ to emphasize that this salt value is different for each user $u$. The parameter $k$ is selected subject to the constraint that $k \leq C_{max}$ --- the maximum amortized cost that the authentication server is willing to incur for authentication.\footnote{In the traditional (deterministic) key-stretching setting it is clear the hash cost parameter $k=C_{max}$ is equivalent to the maximum authentication cost parameter $C_{max}$. However, this equivalence will not hold one we introduce a randomized running time parameter $t$. Thus, it is helpful to use separate notation to separate these distinct parameters. }

When the user authenticates he will type in his username $u$ and a password guess $pwd_u'\in \PasswordSpace$. The authentication server first finds the record $\left(u,s_u,k,\mathbf{H}^k\left(pwd_u,s_u \right)\right)$. It then computes $\mathbf{H}^k\left(pwd_u',s_u \right)$ and verifies that it matches the stored hash value $\mathbf{H}^k\left(pwd_u,s_u \right)$. Note that authentication will always be successful when the user's password is correct (e.g., $pwd_u' = pwd_u$) because the hash values $\mathbf{H}^k\left(pwd_u',s_u \right)$ and $\mathbf{H}^k\left(pwd_u,s_u \right)$ must match in this case. Similarly, if the user's password is incorrect (e.g., $pwd_u \neq pwd_u'$) then authentication will fail with high probability because the cryptographic hash function $\mathbf{H}$ is collision resistant. \newline

\noindent {\bf Server Cost.} Under this traditional password mechanism the cost of verifying/rejecting a password is simply $k$. The authentication server can increase guessing costs for an offline adversary by increasing $k$, but in doing so the authentication server will increase its own authentication costs proportionally. 

\noindent{\bf Authentication Time Increase.} By increasing the cost parameter $k$ the authentication server might potentially increase delay times for the user --- especially if key-stretching is performed on a sequential computer. Bonneau and Schechter\cite{BS14} estimated that $\mathbf{Cost}\left(\mathbf{H}\right) \approx \$ 7 \times 10^{-15}$  for the SHA-256 hash function based on observations of the Bitcoin network. A modern CPU can evaluate SHA-256 around $10^7$ times per second so an authentication server who uses hash iteration for key-stretching would need to select $k \leq 10^7$ if he wants to ensure that user delay is at most one second. In this case we would seem to have an upperbound $\mathbf{Cost}\left(\mathbf{H}^k\right) \leq \$7\times 10^{-8}$ on the cost of a hash function that can be evaluated in $1$ second. Fortunately, this bound only applies to naive hash iteration\footnote{As we previously noted hash iteration alone is not a particularly effective key-stretching technique. The cost of computing SHA-256 can be reduced by a factor of about 1 million on customize hardware --- e.g., see \url{https://bitcoinmagazine.liberty.me/bitmain-announces-launch-of-next-generation-antminer-s7-bitcoin-miner/} (Retrieved 5/4/2016). Furthermore, we note that modern Bitcoin miners already use Application Specific Integrated Circuits to compute SHA-256 so the upper bound from \cite{BS14} implicitly incorporates this dramatic cost reduction. By contrast, the adversary cannot (significantly) reduce the cost of evaluating a memory hard function by developing customized hardware.}. More effective key-stretching techniques could be used to increase $\CostHk$ by several orders of magnitude (e.g., $\CostHk \geq \$10^{-5}$) without imposing longer authentication delays on the user (even if key-stretching is performed on a sequential computer). For example, the SCRYPT~\cite{percival2012scrypt} and Argon2~\cite{Argon2} hash functions were intentionally designed to use a larger amount of memory so that it is not possible to (significantly) reduce hashing costs by developing customized hardware. Additionally, Argon2~\cite{Argon2}, winner of the password hashing competition, has an optional parameter that would allow the authentication server to exploit parallelism to further reduce the amount of time necessary to perform key-stretching. 


\subsection{Adversary Model} We consider an untargeted offline attacker whose goal is to break as many passwords as possible. An offline attacker has breached the authentication server and has access to all of the data stored on the server. In the traditional authentication setting an offline adversary learns the tuple $\left(u,s_u,k,\Hash^k\left(pwd_u,s_u \right)\right)$ for each user $u$. The adversary will also learn the hash function $\Hash$ since the code to compute $\Hash$ is present on the authentication server. We assume that the adversary only uses $\Hash$ in a blackbox manner (e.g., the adversary can query $\Hash$ as a random oracle, but he cannot invert $\Hash$). In general we assume the adversary will obtain the source code for any other procedures that are used during the authentication process. While the authentication server can limit the number of guesses that an online adversary can make (e.g., by locking the adversary out after three incorrect guesses), the authentication server cannot directly limit the number of guesses that an offline attacker can try. An offline attacker is limited only by the resources that s/he is willing to invest trying to crack the user's password.

We assume that the adversary has a value $v_u$ for cracking user $u$'s  password.  An untargeted offline attacker has the same value $v_u = v$ for every user $u$. Symantec recently reported that passwords sell for between $\$4$ and $\$30$ on the black market~\cite{passwordBlackMarket} so we might reasonably estimate that $v \in \left[\$4,\$30\right]$.\footnote{However, this estimate of the adversary's value could be too high because it does not account for the inherent risk of getting caught when selling/using the password}

We also assume that the adversary knows the empirical password distribution $p_1 \geq ... \geq p_n$ over user selected passwords as well as the corresponding passwords $pwd_1,\ldots, pwd_n$. Thus, the adversary knows that a random user will select $pwd_1$ with probability $p_1$, but the adversary does not know which users selected $pwd_1$. 

The adversary will select a threshold $B$ and check (up to) $B$ passwords. In this case the fraction of passwords that the offline adversary will break is at most $\sum_{i=1}^{B} p_i$. Equality holds when the offline adversary adopts his optimal guessing strategy and checks the $B$ most likely passwords $pwd_1,\ldots,pwd_{B}$. In this case the adversary's utility would be
\[ \UAdvDet{B}{v}{k} \doteq  v \sum_{i=1}^{B} p_i - \left(k\sum_{i=1}^{B} i \cdot p_i + \sum_{i=B+1}^n B\cdot p_i\right) \ . \]
The first term is the adversary's expected reward. The last term is the adversary's expected guessing cost.\footnote{Note that for $i \leq B$ the adversary finishes early after only $i$ guesses if and only if the user selected password $pwd_i$ (probability $p_i$). If the user selected password $pwd_i$ with $i > B$ then the adversary will quit after $B$ guesses.} Let $B^* = B_{v}^{det,*} = \arg\max_{B} \UAdvDet{B}{v}{k}$ denote the adversary's utility optimizing strategy. Then the fraction of passwords cracked by a rational adversary will be
\begin{equation}\label{eq:PAdvDet} \PAdvDet{v}{k} \doteq  \sum_{i=1}^{B^*} p_i \ . \end{equation}

\cut{
We use $\mathcal{P}_{Adv,B}$ to denote the fraction of user accounts that a budget-$B$ adversary can compromise in an offline attack.  The value $\mathcal{P}_{Adv,B}$ will depend on the mechanisms used by the authentication server. 

\noindent {\bf Example.} If the authentication server used the traditional deterministic hash iteration mechanism (Section \ref{subsec:Traditional}) with a parameter $k$ then a budget-$B$ adversary will be able to try $\lfloor B/k\rfloor$ password guesses. The adversary's optimal strategy would be to try the $\left\lfloor B/k\right\rfloor$ most popular passwords $pwd_1,\ldots, pwd_{\lfloor B/k \rfloor}$. His success rate would be \[\mathcal{P}_{Adv,B} = \sum_{i=1}^{\left\lfloor B/k \right\rfloor} p_i \ . \]
Thus, the authentication server can reduce the value of $\mathcal{P}_{Adv,B}$ by increasing $k$, but this also increases costs for the authentication server. 

Henceforth, we assume for simplicity that the cost units have been scaled so that $\mathbf{Cost}\left(\mathbf{H} \right)=1$. This allows us to simplify the presentation by eliminating  $\mathbf{Cost}\left(\mathbf{H} \right)$ from our expressions.}

\begin{table}[!t]
\renewcommand{\arraystretch}{1.3}
\caption{Notation}
\centering
\begin{tabular}{|p{1.1in}|p{4.3in}|}
\hline 
{\bf Term} & {\bf Explanation} \\
\hline
$\PasswordSpace$ & space of passwords \\
\hline 
$n$ & number of passwords in $\PasswordSpace$ \\
\hline
$pwd_i$ & the $i$'th most likely password in $\PasswordSpace$ \\
\hline 
$p_i$ &  probability that a random user selects $pwd_i$ \\
\hline
$m$ & the number of evaluations of $\mathbf{H}^k$ necessary to reject an incorrect password using CASH \\
\hline
$t \in \{1,\ldots,m\}$ & hidden running time parameter which specifies the running time of CASH when verifying an correct password. $t$ is randomly selected during account creation. \\
\hline
$\tilde{p}$ & a distribution over the hidden running time parameter $t$ \\
\hline
$\tilde{p}_j$ & the probability that the running time parameter is $t = j$ \\
\hline
$\pi_{i}$ & the probability of the $i$'th most likely  tuple $(pwd,t)$\\
\hline
%
$\alpha$ & probability of seeing a correct password in a random authentication session \\
\hline
$\Hash$ & a cryptographic hash function with $\CostH=1$ \\
\hline
 $\Hash^k$ & a cryptographic hash function with $\CostHk=k$  \\
\hline
$\ServerCost{\alpha}$ & $mk(1-\alpha)+\alpha k \sum_{t=1}^m t\cdot\tilde{p}_t$, the amortized cost of a random authentication session. \\ 
\hline
$C_{max}$ & the maximum (amortized) cost that the authentication server is willing to incur per authentication  \\
\hline
$v$ & adversary's true value for a cracked password \\
\hline 
$\hat{v}$ & the authentication server's estimate for $v$ \\
\hline
$$\PAdvCASH{v}{\hat{v}}{C}$$ & the fraction of passwords cracked by a rational value $v$ adversary, when the authentication server optimizes the CASH distribution $\tilde{p}$ under the belief $\hat{v}$ subject to the cost constraint $\ServerCost{\alpha} \leq C_{max}$. \\
\hline
$$\PAdvUnif{v}{C}$$ & the fraction of passwords cracked by a rational value $v$ adversary, when the authentication server uses the uniform distribution $\tilde{p}_i = 1/m$. The hash cost parameter $k$ is now tuned subject to the cost constraint $\ServerCost{\alpha} \leq C$. \\
\hline
$$\PAdvDet{v}{C}$$ & the fraction of passwords cracked by a rational value $v$ adversary when the authentication server uses deterministic key-stretching techniques. The hash cost parameter is set to $k=C$ so that the servers cost is $C$ for each authentication session. \\
\hline
\end{tabular}

\label{tab:Notation}
\end{table}

\section{CASH Mechanism} \label{sec:CASH}
In this section we introduce the basic CASH mechanism, while deferring until later the question of how to optimize the parameters of the mechanism. 

\subsection{CASH Authentication.} 
Observe that in traditional password authentication the costs of verifying and rejecting a password guess are symmetric. The goal of CASH is to redesign the authentication mechanism so that these costs are not symmetric. In particular, we want to ensure that the cost of rejecting an incorrect password is greater than the cost of accepting a correct password. This is a desirable property because most of the adversary's password guesses during an offline attack will be incorrect. By contrast, the authentication server will spend most of its effort authenticating legitimate users. 
\subsubsection{Creating an Account} Suppose that a user $u$ registers for an account with the password $pwd_u \in \PasswordSpace$. In CASH authentication the authentication server stores the value $\left(u,s_u,k,\mathbf{H}^{k}\left(pwd_u,s_u,t_u \right)\right)$. As before $s_u$ is a random salt value and $k$ is the number of hash iterations. The key difference is that we select a random value $t_u$ from the range $\{1,\ldots,m\}$. We stress that the value $t_u$ is not stored on the authentication server (unlike the salt value $s_u$). Thus, the value $t_u$ will not be available to an adversary who breaches the server. \FullVersion{}{The account creation process is formally presented in Algorithm \ref{alg:CASHCreateAccount}.} We use the notation $t_u$ here to emphasize that this value is chosen independently for each user $u$. Intuitively, the parameter $t_u$ specifies the number of times that the authentication server needs to compute $\mathbf{H}^k$ when verifying a correct password guess using CASH.

\subsubsection{Authentication} When the user $u$ tries to authenticate using the password guess $pwd_u'$ the authentication server first locates the record $\left(u,s_u,k,\mathbf{H}^{k}\left(pwd_u,s_u,t_u \right)\right)$. The authentication server then computes $\mathbf{H}^{k}\left(pwd_u',s_u,t \right)$ for each value $t \in \{1,\ldots, m\}$. Authentication is successful if the hashes match for {\em any} value $t \in \{1,\ldots,m\}$. This is guaranteed to happen after $t_u$ steps whenever the user's password is correct ($pwd_u'=pwd_u$), and this is highly unlikely whenever the user's password is incorrect. \FullVersion{A more formal presentation of the authentication process can be found in the full version\cite{fullVersion} of this paper.}{The authentication process is formally presented in Algorithm \ref{alg:CASHAuthenticate}.}

\subsubsection{CASH Notation} We use $\tilde{p}_i$ to denote the probability that we set $t_u = i$ during the account creation process. For notational convenience we will assume that these values are sorted so that $\tilde{p}_1 \geq \ldots \geq \tilde{p}_m$. We will use $t \leftarrow \tilde{p}$ to denote a random sample from $\{1,\ldots,m\}$ in which $\Pr_{t \leftarrow \tilde{p}}\left[t=i\right] = \tilde{p}_i$.  For now we assume that the CASH distribution $\tilde{p}$ is given to us. In later sections we will discuss how to select a good distribution $\tilde{p}$.


\FullVersion{}{
\begin{algorithm}[H] 
 \caption{CASH: Create Account}
\label{alg:CASHCreateAccount}
\begin{algorithmic}[1]
 \renewcommand{\algorithmicrequire}{\textbf{Input:}}
 \renewcommand{\algorithmicensure}{\textbf{Output:}}
\REQUIRE $u$, $pwd_u$, $\tilde{p}=\left(\tilde{p}_1,\ldots, \tilde{p}_{m}\right)$, $k$, $L$
\STATE $s_u \stackrel{\$}{\gets}\{0,1\}^L$
\STATE $t_u \gets \tilde{p}$ 
\STATE $h \gets \mathbf{H}^k\left(pwd_u,s_u,t_u \right)$
\STATE $\mathbf{StoreRecord}\left(u,s_u,k,h \right)$ 
\end{algorithmic}

\end{algorithm}

\begin{algorithm}[H] 
 \caption{CASH:Authenticate}
\label{alg:CASHAuthenticate}
\begin{algorithmic}[1]
 \renewcommand{\algorithmicrequire}{\textbf{Input:}}
 \renewcommand{\algorithmicensure}{\textbf{Output:}}
\REQUIRE  $u$, $pwd_u$
\STATE $R \gets \mathbf{TryFindRecord}\left(u\right)$
\IF{$R = \emptyset$}
\RETURN ``Username Not Found."
\ENDIF
\STATE $\left(u,s_u,k,h \right) \gets R$
\FOR{$t = 1, \ldots,m$ }
\STATE $h_t \gets \mathbf{H}^k\left(pwd_u,s_u,t \right)$
\IF{$h_t = h$}
\RETURN ``Authentication Successful"
\ENDIF
\ENDFOR
\RETURN ``Authentication Failed"
\end{algorithmic}

\end{algorithm}
}

\subsection{Cost to Server}
 The cost of rejecting an incorrect password guess is $m\cdot k$ because the server must evaluate $\mathbf{H}^{k}\left(pwd_u,s_u,t_u \right)$ for all $m$ possible values of $t_u \in \{1,\ldots, m\}$. However, whenever a password guess is correct the authentication server can halt computation as soon as it finds a match, which  will happen after $t_u$ iterations. Here, we assume that the authentication server will minimize its amortized cost by trying the most likely values of $t_u$ first. If we let $\alpha$ denote the probability that the user enters his password correctly during a random authentication session then the amortized cost of the authentication server is \[\ServerCost{\alpha} \doteq \left(1-\alpha\right) k\cdot m  + \alpha\cdot k \sum_{i=1}^m i \cdot\tilde{p}_i\ .\]
In general, we will assume that the server has a maximum amortized cost $C_{max}$ that it is willing to incur for authentication.\footnote{For example, $C_{max}$ might be (approximately) given by the maximum computational load that the authentication server(s) can handle divided by the maximum (anticipated) number of users authenticating at any given point in time.} Thus, the authentication server must pick the distribution $\tilde{p}$ subject to the cost constraint $\ServerCost{\alpha} \leq C_{max}$.

\subsection{Adversary Response}
Fixing the CASH distribution $\tilde{p}$ induces a distribution over pairs $(pwd,t)\in \PasswordSpace \times \{1,\ldots,m\}$, namely $\Pr[(pwd,t)] = p_i\cdot \tilde{p}_t$. Once the adversary selects a threshold $B$ the adversary's optimal strategy is to try the $B$ most likely pairs. In this case the adversary's utility will be 
\begin{equation} \label{eq:UAdvCASH}  \UAdvCASH{B}{v} = v \sum_{i=1}^B \pi_i - k\sum_{i=1}^B i\cdot \pi_i - k\sum_{i=B+1}^{mn} B\cdot \pi_i \ , \end{equation}
where the terms $\pi_{1} \geq \ldots \geq \pi_{mn}$ denote the probabilities of each pair $(pwd,t)\in \PasswordSpace \times \{1,\ldots,m\}$  (in sorted order). 
In general, the distribution $\tilde{p}$ that the authentication server selects may depend on the maximum (amortized) server cost $C_{max}$ as well as our belief $\hat{v}$ about the adversary's value for a cracked password. Once $\hat{v}$ and $C_{max}$ (and thus $\tilde{p}_1,\ldots,\tilde{p}_m$, $k$ and $\pi_1,\ldots,\pi_{mn}$) have been fixed we can let $B^* = B_{v}^{CASH,*} = \arg\max_{B} \UAdvCASH{B}{v}$ denote the adversary's utility optimizing response. Then the fraction of passwords cracked by a rational adversary will be
\begin{equation} \label{eq:PAdvCASH} \PAdvCASH{v}{\hat{v}}{C_{max}} \doteq  \sum_{i=1}^{B^*} \pi_i \ . \end{equation}
Similarly, we will use $\PAdvUnif{v}{C_{max}}$ to denote the fraction of passwords cracked by a rational adversary when $\tilde{p}$ is the uniform distribution.\footnote{Note that $\PAdvUnif{v}{C_{max}}$ does not depend on $\hat{v}$, our belief about the adversary's value, because the choice of $\tilde{p}$ (and $k$) is independent of this belief.} In this case the hash cost parameter $k$ is tuned to to ensure that $\ServerCost{\alpha} \leq C_{max}$ --- this can be achieved when \begin{equation} \label{eq:kUnifCash} k = \frac{C_{max}}{(1-\alpha)m + \alpha \left(\frac{m+1}{2} \right)} \ . \end{equation}

\subsubsection{Example Distribution} One simple, yet elegant, way to achieve the goal of cost asymmetry is to set $\tilde{p}_j = \frac{1}{m}$ for each $j  \in \{1,\ldots, m\}$~\cite{manber1996simple}. We will sometimes call this solution uniform-CASH in this paper because it is a special case of the CASH mechanism. The amortized cost of verifying a correct password guess with uniform-CASH is $C_{SRV,1} =k \left(\frac{m+1}{2} \right) $. By contrast, the cost of rejecting an incorrect password guess is $k\cdot m $ --- approximately twice the cost of verifying a correct password guess.

\paragraph{Examples with Analysis} The above mechanism can already be used to significantly reduce the fraction of user passwords that would be cracked in an offline attack. We demonstrate the potential power of CASH with two (simplistic) examples. To keep the examples simple we will assume that that users never forget or mistype their passwords (i.e., $\alpha=1$). In the first example, every user selects one of two passwords (e.g., $pwd_1=$``123456" and $pwd_2$=``iloveyou") with probability $p_1=2/3$ and $p_2=1/3$ respectively, and the untargeted adversary has a value of $v=\frac{4}{3} C_{max} + \epsilon$, just slightly more than $C_{max}$ --- the amortized cost  incurred by the authentication server during an authentication session. 
\begin{itemize}
\item (Deterministic Key-Stretching) The defender sets the hash cost parameter $k=C_{max}$ and stores the deterministic hash value $\Hash^k$. It is easy to check that the adversary's optimal response is to choose the maximum threshold $B^*=2$. In this case the adversary cracks the password with probability $\PAdvDet{v}{C_{max}} = 1$. 
\item (Uniform CASH) The defender sets $\tilde{p}_i=\frac{1}{m}$ for each $i$ and he selects cost parameter $k = 2\cdot C_{max}/(m+1)$ to ensure that $C_{SRV,1} \leq C_{max}$ --- see eq \ref{eq:kUnifCash} . It is not too difficult to see that the adversary's optimal response is to choose the threshold $B^* = 0$ (i.e., give up without guessing).\footnote{In particular, if the adversary instead sets $B^*=2m$ (i.e., keep guessing until he succeeds) then his expected guessing costs will be $p_1k\left(\frac{m+1}{2} \right)+(1-p_1)k\left(m+\frac{m+1}{2}  \right) = (1-p_1) km + \frac{m+1}{2}k = C_{max}  + \frac{1}{3}\left(\frac{2mC_{max}}{m+1} \right) = \frac{5C_{max}}{3}-\frac{2C_max}{m+1}> v$.} Thus, $\PAdvUnif{v}{C_{max}} = 0  < 1 = \PAdvDet{v}{C_{max}}$.

\end{itemize}

This first example illustrates the potential advantage of randomization. The next example illustrates the potential advantage of non-uniform distributions. Example 2 is the same as example 1 except that we increase the adversary's value to $v=\frac{5}{3}C_{max}$.  
\begin{itemize}
\item (Deterministic Key-Stretching) Increasing $v$ can only increase $\PAdvDet{v}{C_{max}}$. Thus,  $\PAdvDet{v}{C_{max}}=1$.
\item (Uniform CASH) Now the adversary's optimal strategy is to choose the maximum threshold $B^* = 2m$ (i.e., keep guessing until he finds the password). Thus, $\PAdvUnif{v}{C_{max}} = 1$.
\item (non-uniform CASH) Suppose that the authentication server, knowing that $\hat{v}=v = \frac{5\cdot C_{max}}{2}$, sets $m=5$, $k = C_{max}/2$ and sets $\tilde{p}_1 = 9/16,\tilde{p}_2=\tilde{p}_3=\tilde{p}_4=1/8$ and $\tilde{p}_5=1/16$.\footnote{It is easy to verify that $\ServerCost{\alpha}=2k = C_{max}$. } In this case it is possible to verify that the adversary's optimal response is to set $B^*=2$ meaning that the adversary will try guessing the two most likely pairs $(pwd_1,t=1)$ and $(pwd_2,t=1)$  before giving up. Thus, $\PAdvCASH{v}{\hat{v}}{C_{max}} = \big(p_1+p_2\big)\tilde{p_1} = \frac{9}{16} < 1$. 
\end{itemize}

Admittedly these example are both  overly simplistic. However, we will later consider several empirical password distributions and demonstrate that non-uniform CASH distributions are often significantly better than both uniform CASH and deterministic key-stretching.

\section{Stackelberg Model} \label{sec:Definitions}
In the last section we observed that uniform CASH can reduce the adversary's success rate compared to deterministic key-stretching techniques with comparable costs. We also saw that sometimes it is possible to do even better than uniform CASH by selecting a non-uniform distribution over $t$.\footnote{Of course in some cases the uniform distribution might still be optimal.} This observation leads us to ask the following question: What distribution over $t$ leads to the optimal security results? 

In this section we first formalize the problem of finding the optimal CASH distribution parameters $\tilde{p}_1 \geq \ldots \geq \tilde{p}_m \geq 0$. Intuitively, we can view this problem as the problem of computing the Stackelberg equilibria of a certain game between the authentication server and an untargeted offline adversary. Stackelberg games and their applications have been an active area of research in the last decade (e.g., ~\cite{conitzer2006computing,jain2010security,blocki2013audit,yin2010stackelberg}). 
For now we will simply focus on formulating this goal as an optimization problem. In later sections we will present a polynomial time algorithm to good solutions to this optimization problem (Sections \ref{sec:Algorithm} and \ref{sec:PracticalAlgorithm}) and we will evaluate this algorithm on empirical password datasets (Section \ref{sec:Evaluation}).


\cut{In Section \ref{sec:Algorithm} we present a polynomial time algorithm to find this optimal distribution given the values $p_1,\ldots, p_n$ and $B$. Finally, in Section \ref{sec:Evaluation} we use our algorithm to answer our question. No, the uniform distribution is not always optimal --- especially when the distribution over passwords is far from uniform. By optimizing the distribution $\tilde{p}_1,\ldots, \tilde{p}_m$ we can achieve an even greater reduction in the fraction of passwords that an adversary could crack in an offline attack --- without increasing costs for a legitimate server.}

Before the Stackelberg game begins the adversary is given a value $v$ for cracked passwords and the authentication server is given an honest estimate $\hat{v}=v$ of the adversary's value.\footnote{In the game the authentication server will assume that $\hat{v}$ is indeed the correct value when he computes the distribution $\tilde{p}$. Of course, in our empirical analysis we will also be interested in exploring how CASH performs when this estimate is incorrect $\hat{v}\neq v$.} The authentication server is also given a bound $C_{max}$ on the expected cost of an authentication round.

\paragraph{Defender Action} The authentication server (leader) moves first in our Stackelberg game. The authentication server must commit to a CASH distribution  $\tilde{p}$ and a hash cost parameter $k$. The values must be selected subject to a constraint on the maximum amortized cost for the authentication server
$$\ServerCost{\alpha}=\left(1-\alpha\right)m \cdot k + \alpha\cdot k\sum_{i=1}^m \left( i\cdot \tilde{p}_i\right) \leq C_{max} \ . $$
Intuitively, we can view the value $\alpha$ as being given by nature and the parameter $C_{max}$ is given by the computational resources of the authentication server. 

\paragraph{Offline Adversary} After the authentication server commits to $\tilde{p}$ and $k$ the offline adversary is given access to all of the hashed passwords stored on the authentication server. The adversary can try guesses of the form $\left( pwd_i,j\right)$. This particular guess is correct if and only if the user $u$ selected password $pwd_u = pwd_i$ and we selected the secret salt value $t_u = j$. For an untargeted attacker the probability that this guess is correct is $p_i\cdot \tilde{p}_j$. We can describe the action of a rational adversary using a threshold $B$ which denotes the maximum number of pairs $(pwd,t)$ that he will check (equivalently the maximum number of times he will compute $\Hash^k$). Intuitively, we don't need to specify which pairs the adversary guesses because a rational adversary will always check the $B$ most likely pairs.

We remark that we assume that an offline attacker will be able to obtain the CASH parameters $\tilde{p}_1,\ldots,\tilde{p}_m$ and $k$ that we select.\footnote{An offline adversary has already breached authentication server which will contain code to sample $t_u$  whenever a new user $u$ creates an account.} The adversary also knows the empirical password distribution $p_1 \geq \ldots \geq p_n$ and the associated passwords $pwd_1,\ldots,pwd_n$. 



\paragraph{Optimization Goal} Informally, the defender's goal is to minimize the probability that the rational adversary succeeds in cracking each user's password. The distribution that achieves this goal is the Stackelberg equilibrium of our game. Formally, our optimization goal is presented as Optimization Goal \ref{goal:MinAdvSuccess}. We are given as input the empirical password distribution $p_1,\ldots,p_n$ as well as the value $\hat{v}$ for the adversary,  a maximum cost $C_{max}$ for the authentication server, the CASH parameter $m$ and the fraction $\alpha$ of authentication sessions in which enter their correct password. We want to find values $\tilde{p}_1,\ldots,\tilde{p}_m$ and $k$ that minimize the fraction of cracked passwords $\PAdvCASH{v}{\hat{v}}{C_{max}}$ subject to several constraints. Constraints 1 and 2 ensure that $\tilde{p}_1,\ldots,\tilde{p}_m$ form a valid probability distribution, and constraint 3 ensures that the amortized cost of authentication is at most $C_{max}$. Constraint 4 simply defines the variables $\pi_1,\ldots,\pi_{mn}$ where $\pi_i$ is the probability of the $i$'th most likely tuple $(pwd,t)$. Constraints 5 implies that $B^*$ is the adversary's optimal response (e.g.,  $\UAdvCASH{B^*}{v} \geq \UAdvCASH{B}{v}$ for any other threshold $B$ that the adversary might choose). Finally, $\sum_{i=1}^{B^*} \pi_i$, our minimization goal,   is the fraction of passwords cracked under the adversary's utility optimizing response $B^*$. 

\begin{goal}[H]
\centering
\caption{Minimize Adversary Success Rate}
\label{goal:MinAdvSuccess} 
\begin{algorithmic}
 \renewcommand{\algorithmicrequire}{\textbf{Input Parameters:}}
 \renewcommand{\algorithmicensure}{\textbf{Variables:}}
\REQUIRE$p_1,\ldots, p_n$, $\hat{v}$, $C_{max}$, $m$ and $\alpha$.
\ENSURE   $\tilde{p}_1,\ldots, \tilde{p}_m, \pi_1,\ldots,\pi_{nm}$, $k$
\STATE {\bf minimize} $\sum_{i=1}^{B^*} \pi_i$
subject to
\STATE (1) $1\geq \tilde{p}_1 \geq \ldots \geq \tilde{p}_m \geq 0$,
\STATE (2) $\sum_{i=1}^m \tilde{p}_i = 1$,
\STATE (3) $\left(1-\alpha\right)m k + \alpha k\sum_{i=1}^m \left( i\cdot \tilde{p}_i\right) \leq C_{max}$,
\STATE (4) $\pi_1, \ldots , \pi_{mn}  = \mathbf{Sort}\left( p_{1}\cdot \tilde{p}_1, \ldots, p_n\cdot\tilde{p}_{m}  \right)$, and

\STATE (5) $\forall B \in \{0,1,\ldots,mn\}$ we have 
\[ \UAdvCASH{B^*}{v} \geq \UAdvCASH{B}{v} \ . \]
\end{algorithmic}

\end{goal}

Unfortunately, Optimization Goal \ref{goal:MinAdvSuccess} is inherently non-convex due to the combination of constraints 4 and 5.\footnote{Substituting in the formula for $\UAdvCASH{B}{v}$ constraint 5 becomes $v\sum_{i=1}^B \pi_i - k\sum_{i=1}^B i\cdot \pi_i - k\sum_{i=B+1}^{mn} B\cdot \pi_i$  $\leq v \sum_{i=1}^{B} \pi_i - k\sum_{i=1}^{B^*} i\cdot \pi_i - k\sum_{i=B^*+1}^{mn} B^* \cdot \pi_i$, where $\pi_i$ depends on the $\mathbf{Sort}$ operation.  } Thus, it is not clear whether or not there is a polynomial time algorithm to compute the Stackelberg equilibria. However, as we will see in the next section, there is a polynomial time algorithm to solve a very closely related goal. Minimize the number of passwords that a threshold $B$ adversary can crack  (Goal \ref{goal:MinAdvSuccessAsLP}).

\section{Algorithms} \label{sec:Algorithm}
In this section we show how the goal of minimizing the success rate of a threshold $B$ adversary can be formulated as a linear program with exponentially many constraints (Optimization Goal \ref{goal:MinAdvSuccessAsLP}). We also show that this linear program can be solved in polynomial time by developing an efficient separation oracle. Unfortunately, this polynomial time algorithm is not efficient enough to solve the large real-world instances we consider in our experiments in Section \ref{sec:Evaluation}. However, building on ideas from Section \ref{sec:Algorithm}, we develop a more efficient (in practice) algorithm in Section \ref{sec:PracticalAlgorithm}. This new algorithm always finds an approximately optimal solution to Optimization Goal \ref{goal:MinAdvSuccessAsLP}. While we do not have any theoretical guarantees about its running time, we found that it converged quickly on every instance we tried. Furthermore, as we will see in our experimental evaluation, the algorithm results in significantly improved Stackelberg strategies. 


We remark that our experimental results in Section \ref{sec:Evaluation} can be understood without reading this section. In particular, it is possible to view the algorithms in  Sections \ref{sec:Algorithm} and \ref{sec:PracticalAlgorithm}, as a blackbox heuristic algorithm that finds reasonably good solutions to Optimization Goal \ref{goal:MinAdvSuccess}. 
A more empirically inclined reader may wish to skip to our experimental results in Section \ref{sec:Evaluation} after skimming through this section.

\subsection{LP Formulation} \label{subsec:LP} We first show how to state our goal, minimize the number of passwords that a threshold $B$ adversary will crack, as a linear program. Our LP uses the following variables $\mathcal{P}_{Adv,B}, \tilde{p}_1,\ldots,\tilde{p}_m$. Intuitively, the variable $\mathcal{P}_{Adv,B}$ represents the fraction of passwords that a threshold $B$ adversary can crack. At a high level our Linear Program can be understood as follows: minimize $\mathcal{P}_{Adv,B}$ subject to the requirement that no feasible strategy for the threshold $B$ adversary  achieves a success rate greater than $\mathcal{P}_{Adv,B}$. This requirement can be expressed as a combination of exponentially many linear constraints. Formally, our LP is presented as Optimization Goal \ref{goal:MinAdvSuccessAsLP}. 

\begin{goal}[H]
\caption{Minimize Threshold $B$ Adversary Success Rate}
\label{goal:MinAdvSuccessAsLP} 
\begin{algorithmic}
 \renewcommand{\algorithmicrequire}{\textbf{Input Parameters:}}
 \renewcommand{\algorithmicensure}{\textbf{Variables:}}
\REQUIRE $p_1,\ldots, p_n$, $B$, $C_{max}$, $m$, $k$, $\alpha$
\ENSURE $\tilde{p}_1,\ldots, \tilde{p}_m, \mathcal{P}_{Adv,B} $
\STATE {\bf minimize} $\mathcal{P}_{Adv,B}$ subject to
\STATE (1) $1\geq \tilde{p}_1 \geq \ldots \geq \tilde{p}_m \geq 0$,
\STATE (2) $\sum_{i=1}^m \tilde{p}_i = 1$,
\STATE (3) $\left(1-\alpha\right)m k + \alpha k\sum_{i=1}^m \left( i\cdot \tilde{p}_i\right) \leq C_{max}$,
\STATE (4) $0 \leq \mathcal{P}_{Adv,B} \leq 1$, and
\STATE (5) $\forall S \subset \PasswordSpace \times \{1,\ldots,m\}$ s.t $|S| = B$ we have
$$\mathcal{P}_{Adv,B} \geq \sum_{(i,j) \in S} p_i\cdot \tilde{p}_j \ .$$  
\end{algorithmic}
\end{goal}

The key intuition is that all of the (5) constraints ensure that $\mathcal{P}_{Adv,B}$ is at least at big as the best success rate for a threshold $B$ adversary. This is true because the optimal guessing strategy for a threshold $B$ adversary is to guess the $B$ most likely tuples $(pwd,t)$. Let $S'$ denote these $B$ most-likely tuples then one of the type (5) constraints says that $\mathcal{P}_{Adv,B} \geq \sum_{(i,j) \in S'} p_i\cdot \tilde{p}_j$. Thus, type (5) constraints guarantee we cannot `cheat' by pretending like the adversary will follow a suboptimal strategy (e.g., spending his guessing budget on the least likely passwords) when we solve Optimization Goal \ref{goal:MinAdvSuccessAsLP}.

The key challenge in solving Optimization Goal \ref{goal:MinAdvSuccessAsLP} is that there are exponentially many type (5) constraints. Our main result in this section states that we can still solve this problem in polynomial time.  
\newcommand{\thmPolyTime}{We can find the solutions to Optimization Goal \ref{goal:MinAdvSuccessAsLP} in polynomial time in $m$, $n$ and $L$, where $L$ is the bit precision of our inputs.}
\begin{theorem}\label{thm:Polytime} \thmPolyTime
\end{theorem}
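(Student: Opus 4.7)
The plan is to apply the ellipsoid method of Gr\"otschel--Lov\'asz--Schrijver, which solves a linear program in time polynomial in the dimension, the input bit-precision $L$, and the runtime of a separation oracle. Optimization Goal \ref{goal:MinAdvSuccessAsLP} has $m+1$ decision variables ($\tilde{p}_1,\ldots,\tilde{p}_m$ and $\mathcal{P}_{Adv,B}$) plus the fixed parameter $k$, and constraints (1)--(4) are only polynomial in number and can be checked in $O(m)$ time. Therefore the entire problem reduces to exhibiting a polynomial-time separation oracle for the exponentially large family of type (5) constraints.

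For the separation oracle, given any candidate point $(\tilde{p}_1,\ldots,\tilde{p}_m,\mathcal{P}_{Adv,B})$, the most-violated type (5) constraint (if one exists) corresponds to the set $S^*\subseteq \PasswordSpace\times\{1,\ldots,m\}$ of size $B$ maximizing $\sum_{(i,j)\in S}p_i\tilde{p}_j$. A standard exchange argument shows that $S^*$ is simply the set of $B$ pairs $(i,j)$ having the largest products $p_i\tilde{p}_j$. Since there are only $nm$ products in total, we can find the top-$B$ in time $O(nm\log(nm))$ by sorting (or $O(nm)$ via a selection algorithm). The oracle then returns the violated inequality $\mathcal{P}_{Adv,B}\geq\sum_{(i,j)\in S^*}p_i\tilde{p}_j$ whenever this sum exceeds the current value of $\mathcal{P}_{Adv,B}$, and otherwise declares feasibility. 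This runs in time polynomial in $n$, $m$, and $L$.

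Combining this separation oracle with the ellipsoid method yields the theorem. The input coefficients $p_1,\ldots,p_n$ have bit-length at most $L$, and the coefficients appearing in the separating hyperplanes produced by the oracle are products of two numbers each of bit-length at most $L$, so all intermediate numerical quantities have bit-length polynomial in $n$, $m$, and $L$. Consequently the standard polynomial-time guarantees of the equivalence between separation and optimization apply, and we obtain a solution in time polynomial in $n$, $m$, and $L$.

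I do not anticipate a serious obstacle: the exponential family of constraints has very simple combinatorial structure (pick the $B$ largest among a polynomial-size collection of products), so the separation problem is essentially trivial. The only care required is verifying that the ordering constraint $\tilde{p}_1\geq\cdots\geq\tilde{p}_m$ is not violated by the oracle---but this constraint is listed as (1) and handled explicitly by the ellipsoid method, so the oracle for type (5) constraints may ignore it entirely. The bulk of the work lies in invoking the well-known machinery of Gr\"otschel, Lov\'asz, and Schrijver rather than in any novel analysis.
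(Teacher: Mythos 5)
Your proposal is correct and follows essentially the same route as the paper: both reduce the problem to constructing a separation oracle for the exponentially many type~(5) constraints, observe that the most-violated such constraint is obtained by sorting the $mn$ products $p_i\cdot\tilde{p}_j$ and taking the $B$ largest (the paper's Algorithm~\ref{alg:SeparationOracle} and Theorem~\ref{thm:SeparationOracle}), and then invoke the ellipsoid method~\cite{khachiyan1980polynomial} to conclude polynomial-time solvability. Your additional remarks on bit-length and on handling constraints (1)--(4) explicitly match the paper's treatment, so there is no substantive difference.
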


\FullVersion{The proof of Theorem \ref{thm:Polytime} can be found in the full version of this paper\cite{fullVersion}.}{The proof of Theorem \ref{thm:Polytime} can be found in the appendix.} We briefly overview the proof strategy here. The key idea is to build a polynomial time separation oracle for Optimization Goal \ref{goal:MinAdvSuccessAsLP}. Given a candidate solution $\tilde{p}$ the separation oracle should either tell us that the solution is feasible (satisfies all type (5) constraints) or it should  find an unsatisfied constraint.  We can then use the ellipsoid method~\cite{khachiyan1980polynomial} with our separation oracle to solve to solve the linear program in polynomial time.  \FullVersion{}{In appendix \ref{subsec:SeparationOracle} we show how to develop a polynomial time separation oracle for our linear programs.} Intuitively, the separation oracle simply sorts the tuples $\PasswordSpace \times \{1,\ldots, m\}$ using the associated probabilities $\Pr[(pwd_i,t)] = p_i\cdot \tilde{p}_t$. Then we can find the set $S'$ of the $B$ most likely tuples and check to see if the constraint $\mathcal{P}_{Adv,B} \geq \sum_{(i,j) \in S'} p_i\cdot \tilde{p}_j$ is satisfied. 

\cut{As before Constraints (1) and (2) ensure that we get a valid probability distribution over our runtime parameter $t$ and constraint (3) ensures that the amortized cost of verifying a correct password guess is at most $C_{SRV}$. The key difference between Goal \ref{goal:MinAdvSuccessAsLP} and Goal \ref{goal:MinAdvSuccess} is that we treat $\mathcal{P}_{Adv,B}$ as a variable, which represents the adversary's success probability. Constraint (4) ensures that $\mathcal{P}_{Adv,B}$ is a valid probability value. We remark that our linear program would be meaningless without our type (5) constraints because  constraints (1)--(3) do not constrain the value $\mathcal{P}_{Adv,B}$. Thus, without type (5) constraints the optimal solution to our LP would simply set  $\mathcal{P}_{Adv,B} = 0$. We use constraint (5) to ensure that for {\em every} feasible strategy that the adversary might adopt to attack the user's password the value of $\mathcal{P}_{Adv,B}$ is at least as large as the adversary's success rate when following that strategy. In other words type (5) constraints guarantee we cannot cheat by pretending like the adversary will follow a suboptimal strategy (e.g., spending his guessing budget on the least likely passwords).}

Once we have a polynomial time algorithm to solve Optimization Goal \ref{goal:MinAdvSuccessAsLP} for a fixed value of $k$ we could adopt the multiple LP framework of Conitzer and Sandholm~\cite{conitzer2006computing} to include $k$ as an optimization parameter. The idea is simple. Because the range of possible values of $k$ is small $(k \leq C_{max})$ we can simply solve Optimization Goal \ref{goal:MinAdvSuccessAsLP} separately for each value of $k$ and take the best solution --- the one with the smallest value of $\mathcal{P}_{Adv,B}$.

\subsection{Practical CASH Optimization} \label{sec:PracticalAlgorithm}
Theorem \ref{thm:Polytime} states that Optimization Goal \ref{goal:MinAdvSuccessAsLP} can be solved in polynomial time using the ellipsoid algorithm~\cite{khachiyan1980polynomial}. While this is nice in theory the ellipsoid algorithm is rarely deployed in practice because the running time tends to be very large. In this section we develop a heuristic algorithm (Algorithm \ref{alg:HeuristicAlg}) to solve Goal \ref{goal:MinAdvSuccessAsLP} using our separation oracle. While algorithm \ref{alg:HeuristicAlg} is guaranteed to always find the (approximately) optimal solution to Optimization Goal \ref{goal:MinAdvSuccessAsLP}, we do not have any theoretical proof that it will converge to find the optimal solution in polynomial time. However, in all of our experiments we found that Algorithm \ref{alg:HeuristicAlg} converged reasonably quickly. 

The basic idea behind our heuristic algorithm is to start by ignoring all of the type (5) constraints from Goal \ref{goal:MinAdvSuccessAsLP}. We then run a standard LP solver to find the optimal solution to the resulting LP. Finally, we run our separation oracle to determine if this solution violates any type (5) constraints. If it does not then we are done. If the separation oracle does find a violated type (5) constraint then we add this constraint to our LP and solve again. We repeat this process until we have a solution that satisfies all type (5) constraints. Observe that this process must terminate because we will eventually run out of type (5) constraints to add. The hope is that our algorithm will converge much more quickly. In practice, we find that it does  (e.g., at most $25$ iterations).

\paragraph{Further Optimizations} Our separation oracle runs in time $O\left(mn \log mn\right)$ because we sort a list of $mn$ tuples $(pwd,t)$. In practice, the number of passwords $n$ might be very large (e.g., the RockYou dataset contains $n\approx 14.3\times 10^6$ unique passwords).  Fortunately, it is often possible to drastically reduce the time and space requirements of our separation oracle by grouping passwords into equivalence classes. In particular, we group two passwords $pwd_i$ and $pwd_{j}$ into an equivalence class if and only if $p_i = p_j$. This approach reduces running time of our separation oracle to $O\left( mn' \log mn'\right)$, where $n'$ is the number of equivalence classes\footnote{To save computation one could also group passwords into equivalence classes with {\em approximately} equal probabilities, but this representation loses some accuracy and was unnecessary in all of our experiments. }. For example, the RockYou database contains over $10^7$ unique passwords, but we only get $n' = 2040$ equivalence classes.

We can represent our empirical distribution over passwords as a sequence of $n'$ pairs $\left(p_1,n_1\right), \ldots, \left( p_{n'},n_{n'}\right)$. Here, $p_i$ denotes the probability of a password in equivalence class $i$ and $n_i \in \mathbb{N}$ denotes the total number of passwords in equivalence class $i$. We have $\sum_{i=1}^{n'} n_i = n$ and $\sum_{i=1}^{n'} n_i \cdot p_i = 1$. As before we assume that $p_i \geq p_{i+1}$. In most password datasets $n_{n'}$ is the number of passwords that were selected by only one user (e.g., for the RockYou dataset $n_{n'} \approx 11.9\times 10^6$). 

We now argue that this change in view does not fundamentally alter our linear program (Optimization Goal \ref{goal:MinAdvSuccessAsLP}) or our separation oracle. Constraints (1)--(4) in our LP remain unchanged. We need to make a few notational changes to type (5) constraints to ensure that $\mathcal{P}_{Adv,B}$ is at least as large at the success rate of the optimal adversary. We use \[\mathcal{F}_B = \left\{ \left(b_1,\ldots,b_{n'} \right) \in \mathbb{N}^{n'} ~\vline \sum_{i=1}^{n'} b_i \leq B \wedge \forall i. b_i \leq m\cdot n_i \right\}  \ , \]
to describe the space of feasible guessing strategies for an adversary with a threshold $B$. Here, $b_i$ denotes the total number of times the adversary evaluates $\mathbf{H}^k$ while attacking passwords in equivalence class $i \leq n'$. Thus, the range of $b_i$ is $0 \leq b_i \leq m \cdot n_i $  because there are $n_i$ passwords in the equivalence class to attack and he can choose to evaluate  $\mathbf{H}^k$ up to $m$ times for each password. 

Given values $\tilde{p}_1,\ldots,\tilde{p}_m$ and a feasible allocation $b_1,\ldots,b_{n'} \in \mathcal{F}_B$ the probability that adversary will crack the password is at most \[\sum_{i=1}^{n'} p_i  \left( \left(b_i \mod{n_i}\right)\tilde{p}_{\left\lceil \frac{b_i}{n_i} \right\rceil} + \sum_{j=1}^{\left\lfloor \frac{b_i}{n_i} \right\rfloor} n_i \tilde{p}_j \right) \ .\] 
Intuitively, the optimal adversary will spend equal effort ($b_i/n_i$) cracking each password in an equivalence class because they all have the same probability. The $\left(b_i \mod{n_i}\right)$ and $\lfloor b_i/n_i \rfloor$ terms handle the technicality that $b_i$ may not be divisible by $n_i$. Thus, we can replace our type (5) constraints with the constraint \[ \mathcal{P}_{Adv,B} \geq \sum_{i=1}^{n'} p_i  \left( \left(b_i \mod{n_i}\right)\tilde{p}_{\left\lceil \frac{b_i}{n_i} \right\rceil} + \sum_{j=1}^{\left\lfloor \frac{b_i}{n_i} \right\rfloor} n_i \tilde{p}_j \right) \ , \]
for every $\left(b_1,\ldots,b_{n'} \right)\in \mathcal{F}_B$.

Our modified separation oracle works in essentially the same way. We sort the tuples $(i,j)$ using the values $p_{i,j}' = p_i \cdot \tilde{p}_j$ and select the $B$ largest tuples. The only difference is that the adversary is now allowed to select the tuple $(i,j)$ up to $n_i$ times. In this section we will use $\mathbf{SeparationOracle}$ to refer to the modified separation oracle, which runs in time $O\left( mn' \log mn'\right)$ using our compact representation of the empirical password distribution.

Our heuristic algorithm (Algorithm \ref{alg:HeuristicAlg})  takes as input an approximation parameter $\epsilon$. It is allowed to output a solution $\tilde{p}_1,\ldots,\tilde{p}_m, \mathcal{P}_{Adv,B}$ as long as the solution is within $\epsilon$ of optimal --- for any other feasible solution $\tilde{p}_1',\ldots,\tilde{p}_m', \mathcal{P}_{Adv,B}'$ we have $\mathcal{P}_{Adv,B} \leq \mathcal{P}_{Adv,B}'+\epsilon$. We use $\mathbf{Slack}$ to denote a function that computes how badly a linear inequality $C$ is violated. For example, if $C$ denotes the inequality $x+y \geq 2.5$ and we have set $x' = y' = 1$ then $\mathbf{Slack}\left(C,x',y'\right) = 0.5$ (e.g., if we introduced a slack variable $z$ then we would need to select $z'$ such that $\left|z'\right| = 0.5$ to satisfy the inequality $x'+y'+z'\geq 2.5$). 

\begin{algorithm}[H]
 \caption{$\mathbf{Optimize}\left(p,n,B,C_{max},\alpha, \epsilon,m, S\right)$}
 \label{alg:HeuristicAlg}
\begin{algorithmic}[1]
 \renewcommand{\algorithmicrequire}{\textbf{Input:}}
 \renewcommand{\algorithmicensure}{\textbf{Output:}}
\REQUIRE $p_1,\ldots, p_{n'}$, $n_1,\ldots,n_{n'}$,  $B$, $C_{max}$, $\alpha$,  $\epsilon$, $m$,  $S = \left\{k_0,k_1,\ldots, ,k_\tau \right\}$,

\STATE $bestSolution \gets \emptyset$, $bestK \gets k_0$
\STATE $bestSuccessRate \gets 1.0$,  $slack \gets \epsilon$
\FOR{$j=0,\ldots, \tau$}
\STATE $k \gets k_j$
\STATE $C \gets \mathbf{InitialConstraints}(C_{max},\alpha,k)$ \newline 
\COMMENT{Initially, $C$ only includes constraints (1)--(4) \newline in goal \ref{goal:MinAdvSuccessAsLP}}

\STATE $Goal \gets \{ \min \mathcal{P}_{Adv,B} \}$
\STATE $Vrbls \gets \{ \mathcal{P}_{Adv,B}, \tilde{p}_1,\ldots, \tilde{p}_m\}$
\STATE  $\mathcal{P}_{Adv,B}',\tilde{p}_1',\ldots,\tilde{p}_m' \gets  \mathbf{LPSolve}\left(Goal,Vrbls, C\right)$
\STATE $\tilde{p}' \gets \left(\tilde{p}_1',\ldots,\tilde{p}_m' \right)$
\STATE $\vec{p} \gets \left( p_1,\ldots,p_{n'} \right)$
\STATE $\vec{n} \gets \left(  n_1,\ldots,n_{n'}\right)$
\STATE $Sep_{in} \gets \left(\vec{p},\vec{n},\tilde{p}',B,k,C_{SRV,\alpha}, \mathcal{P}_{Adv,B}' \right)$
\STATE $C' \gets \mathbf{SeparationOracle}\left(Sep_{in}\right)$
\WHILE{$\left|\mathbf{Slack}\left(C',\tilde{p}, \mathcal{P}_{Adv,B}'   \right) \right| > \epsilon \wedge \left(C' \neq  \mbox{``Ok"} \right)$ }
\STATE $C \gets C \bigcup \{C'\}$
\STATE  $\mathcal{P}_{Adv,B}',\tilde{p}'\gets\mathbf{LPSolve}\left(Goal,Vrbls, C\right)$ \STATE~~~~~~~~~~~~~~~~~~~~~\COMMENT{$\tilde{p}' = \left(\tilde{p}_1',\ldots,\tilde{p}_m' \right)$}
\STATE $Sep_{in} \gets \left(\vec{p},\vec{n},\tilde{p}',B,k,C_{SRV,\alpha}, \mathcal{P}_{Adv,B}' \right)$
\STATE $C' \gets \mathbf{SeparationOracle}\left(Sep_{in}\right)$
\ENDWHILE
\IF{$bestSuccessRate \geq \mathcal{P}_{Adv,B}'$}
\STATE $bestSolution \gets \tilde{p}_1,\ldots, \tilde{p}_m$
\STATE $bestSuccessRate \gets \mathcal{P}_{Adv,B}'$
\STATE $\left( bestM,bestK\right) \gets \left( m_i,k_i\right)$
\STATE $slack \gets \mathbf{Slack}\left(C',\tilde{p}, \mathcal{P}_{Adv,B}'   \right)$
\ENDIF
\ENDFOR
\RETURN $\tilde{p}_1,\ldots, \tilde{p}_m, bestK$
\end{algorithmic}
\end{algorithm}

\subsection{Choosing a CASH Distribution}
While Algorithm \ref{alg:HeuristicAlg} efficiently solves optimization Goal \ref{goal:MinAdvSuccessAsLP}, it may not yield the optimal distribution for our original Stackelberg game. In particular, while Algorithm \ref{alg:HeuristicAlg} gives the optimal distribution against a threshold-$B$ adversary, the rational adversary might choose to use a different threshold $B^* \neq B$. 

We introduce a heuristic algorithm to find good Stackelberg strategies (CASH distributions) for the defender.  Algorithm \ref{alg:HeuristicAlg2} uses Algorithm \ref{alg:HeuristicAlg} as a subroutine to search for good CASH distributions. Algorithm \ref{alg:HeuristicAlg2} takes as input an (estimate) $\hat{v}$ of the adversary's value and a set $\mathcal{B}$ of potential adversary thresholds $B$ and runs Algorithm \ref{alg:HeuristicAlg} to compute the optimal distribution for each threshold. We then compute the rational value $\hat{v}$ adversary's best response to each of distributions and find the best distribution for the authentication server --- the one which results in the lowest fraction of cracked passwords under the corresponding best adversary  response. Algorithm \ref{alg:HeuristicAlg2} assumes a subroutine $\mathbf{RationalAdvSuccess}\left(p,n,\hat{v},\tilde{p},k\right)$, which computes the fraction of cracked passwords under a value $\hat{v}$ adverary's best response to the CASH distribution $\tilde{p}$ with empirical password distribution defined by the pair $(p,n)$ and a hash cost parameter $k$. 

\begin{algorithm}[H]
 \caption{$\mathbf{FindCASHDistribution}$}
 \label{alg:HeuristicAlg2}
\begin{algorithmic}[1]
 \renewcommand{\algorithmicrequire}{\textbf{Input:}}
 \renewcommand{\algorithmicensure}{\textbf{Output:}}
\REQUIRE $p_1,\ldots, p_{n'}$, $n_1,\ldots,n_{n'}$,  $\hat{v}$, $C_{max}$, $\alpha$,  $\epsilon$, $m$, $S = \left\{k_0,k_1,\ldots,k_\tau \right\}$, $\mathcal{B} = \{B_0,B_1,\ldots,B_\ell\}$

\STATE $\tilde{p}_1,\ldots,\tilde{p}_m \gets 1/m$
\STATE $$k \gets  \frac{C_{max}}{(1-\alpha)m + \alpha \left(\frac{m+1}{2} \right)}$$
\STATE $advSuccess \gets \PAdvUnif{\hat{v}}{C_{max}}$

\FOR{$x=0,\ldots, \ell$}
\STATE $B \gets B_x$
\STATE $\tilde{p}_x, k_x \gets \mathbf{Optimize}\left(p,n,B,C_{max},\alpha,\epsilon,m, S\right)$
\STATE $CS \gets \mathbf{RationalAdvSuccess}\left(p,n,\hat{v},\tilde{p}_x,k_x\right)$
\IF{$CS\leq advSuccess$}
\STATE $\tilde{p} \gets \tilde{p}_x$
\STATE $k \gets k_x$
\STATE $advSuccess \gets CS$
\ENDIF
\ENDFOR
\RETURN $\tilde{p}$, $k$
\end{algorithmic}

\end{algorithm}

We remark that the subroutine $\mathbf{RationalAdvSuccess}$ can be computed in time $O\big(n'm \log mn'\big)$ --- the most expensive step is sorting the $mn'$ pairs $(p_i, \tilde{p}_j)$ based on the value $p_i\cdot \tilde{p}_j$. Once we have these pairs in sorted order there is a simple formula for computing the marginal benefit/costs of a larger threshold $B$. \FullVersion{See the full version\cite{fullVersion} of the paper for more details.}{ See Algorithm \ref{alg:AdvSuccess} in the appendix for more details. }

We remark that Algorithm \ref{alg:HeuristicAlg2} is not guaranteed to always find the optimal solution to optimization goal \ref{goal:MinAdvSuccess}. It may be viewed as a heuristic algorithm that generates many promising candidate CASH distributions and then selects the best distribution among them.

\section{Experimental Results} \label{sec:Evaluation}
In this section we empirically demonstrate that our CASH mechanism can be used to significantly reduce the fraction of accounts that an offline adversary could compromise. We implemented Algorithm \ref{alg:HeuristicAlg2} in C\# using Gurobi as our LP solver, and analyzed CASH using two real-world password distributions $p_1,\ldots,p_n$. The first distribution is based on data from the RockYou password breach ($32+$ million passwords) and the second is based on password frequency data from Yahoo! users (representing $\approx 70$ million passwords). The later dataset was not the result of a security breach. Instead, Yahoo! gave Bonneau~\cite{bonneau2012science} permission to collect and analyze password frequency data in a carefully controlled environment. Yahoo! recently allowed Blocki et al.~\cite{blocki2016differentially} to use a differentially private~\cite{dwork2006calibrating} algorithm to publish this data. Thus, the password frequency data in this data set has been perturbed slightly.  Blocki et al.~\cite{blocki2016differentially} also showed that with high probability the L1 error introduced by their algorithm would be minimal.

In each of our experiments we fix the password correctness rate $\alpha \in \{1,0.95,0.9\}$ and the maximum amortized server cost $C_{max}$ before using Algorithm \ref{alg:HeuristicAlg2} to find a CASH parameters $\tilde{p}_1,\ldots, \tilde{p}_m$ and $k$ subject to the appropriate constraints on the amortized server costs. 

We compare the $\%$ of cracked passwords under three different scenarios: 
\begin{itemize}
\item (Deterministic Key-Stretching) The authentication server selects a hash function $\Hash^k$  with cost parameter  $k = C_{max}$ (achieved through traditional deterministic key-stretching techniques). The rational value $v$ adversary will crack each password with probability $\PAdvDet{v}{k}$ (eq \ref{eq:PAdvDet}).   
\item (Uniform-CASH) The authentication server uses CASH with the uniform distribution. He sets $k$ according to eq \ref{eq:kUnifCash}
to ensure that his amortized costs are at most $C_{max}$. A rational value $v$ adversary will crack each password with probability $\PAdvUnif{v}{C_{max}}$. 

\item (CASH) Given an estimate $\hat{v}$ of the adversary's budget we used Algorithm \ref{alg:HeuristicAlg2} to optimize the CASH parameters $k$ and $\tilde{p}_1,\ldots, \tilde{p}_m$ subject to the constraint that the amortized server cost is at most $C_{max}$ when users enter the wrong password with probability $1-\alpha$. We fixed the parameters $m=50, \epsilon = 0.02$, and we set $\mathcal{B} = \{5\cdot C_{max} \times 10^4$, $C_{max}\times 10^6$, $C_{max}\times 10^7$, $1.5\cdot C_{max}\times 10^7$, $2.0\cdot C_{max} \times 10^7$,  $2.5\cdot C_{max} \times 10^7$, $2.65\cdot C_{max} \times 10^7$, $2.8\cdot C_{max} \times 10^7$, $3.0\cdot C_{max} \times 10^7$, $5.0\cdot C_{max} \times 10^7$, $C_{max} \times 10^8\}$. Thus, Algorithm \ref{alg:HeuristicAlg2} computes the optimal distribution against a threshold $B$ adversary for each $B \in \mathcal{B}$, and selects the best distribution $\tilde{p}$ against a value $\hat{v}$ adversary.  $\PAdvCASH{\hat{v}}{\hat{v}}{C_{max}}$ will denote the fraction of cracked passwords when the true value is $v=\hat{v}$. When the adversary's true value is $v \neq \hat{v}$, $\PAdvCASH{v}{\hat{v}}{C_{max}}$  will denote the fraction of cracked passwords.  
\end{itemize}

Our results indicate that an authentication server could significantly reduce the fraction of compromised passwords in an offline attack by adopting our optimal CASH mechanism instead of deterministic key-stretching or uniform-CASH. These results held robustly for both the RockYou and Yahoo! password distributions.


\subsubsection{Password Datasets} \label{subsubsec:RockYou} We use two password frequency datasets, RockYou and Yahoo!, to analyze our CASH mechanism. The RockYou dataset contains passwords from $N\approx 32.6$ million RockYou users, and the Yahoo! dataset contains data from $N\approx 70$ million Yahoo! users. We used frequency data from each of these datasets to obtain an empirical password distribution $p_1 \geq p_2 \geq p_3 \ldots \geq p_n$ over $\PasswordSpace$. 

The RockYou dataset is based on actual user passwords which were leaked during the infamous RockYou security breach (RockYou had been storing these passwords in the clear). The total number of unique passwords in the dataset was $n \approx 14.3$ million. Approximately, $11.9$ million of these passwords were unique to one RockYou user. The other $\approx 2.5$ million passwords were used by multiple users. The most popular password ($pwd_1=$ `123456') was shared by $\approx 0.3$ million RockYou users ($p_1 \approx 0.01$). RockYou did not impose strict password restrictions on its users (e.g. users were allowed to select passwords consisting of only lowercase letters or only numbers). 

We also used (perturbed) password frequency data from a dataset of $N\approx 70$ million Yahoo! passwords. See~\cite{bonneau2012science} for more details about how this data was collected and see~\cite{blocki2016differentially} for more details about how the frequency data was perturbed to satisfy the rigorous notion of differentially privacy~\cite{dwork2006calibrating}. Blocki et al.~\cite{blocki2016differentially} proved that with high probability the L1 distortion of the perturbed frequency data is bounded by $O\left( \sqrt{N}/\epsilon\right)$, where the privacy parameter was set to $\epsilon=0.25$ when the Yahoo! dataset was published. Thus, the perturbed dataset will also still give us a good estimate of the empirical password distribution.  

\cut{
\subsubsection{Comparing Defenses}
Given fixed values $B, \alpha,$ and $C_{SRV,\alpha}$ it is critically important to make a fair comparison between the different defenses that the authentication server might adopt. If the authentication server adopts a deterministic key-stretching algorithm $\mathbf{H}^{k'}$ then he can afford to set $k'= C_{SRV,\alpha}$ so that his average cost per authentication is $C_{SRV,\alpha}$. In this case the adversary can guess $B/k'$ passwords (to simplify presentation we assume that $k'$ divides $B$) so we have \[\mathcal{P}_{ADV,B} = \sum_{i=1}^{\lfloor B/k' \rfloor} p_i \ . \]
Similarly, if the authentication server adopts the uniform-CASH strategy with $k=1$ then he can afford to set 
\[m' = \frac{2\cdot C_{SRV,\alpha} - \alpha}{2-\alpha} \approx  \frac{2\cdot C_{SRV,\alpha}}{2-\alpha} \ , \]
so that his average cost per login is at most $C_{SRV,\alpha}$. 
In this case we will have \[\mathcal{P}_{ADV,B} = \sum_{i=1}^{ B/m' } p_i  \ , \]
because the adversary's optimal strategy is to spend all of his budget guessing the $B/m'$ most popular passwords. Finally, if the authentication server adopts our CASH mechanism then he will run algorithm \ref{alg:HeuristicAlg} with the fixed values $B, \alpha, C_{SRV,\alpha}$ to obtain $m^*,k^*,\tilde{p}_1,\ldots,\tilde{p}_m$. The algorithm is guaranteed to find a solution in which the average cost at most $C_{SRV,\alpha}$. In this case the adversary's success rate is 
\[\mathcal{P}_{ADV,B} = \max_{b \in \mathcal{F}_B} \sum_{i=1}^{n} \sum_{j=1}^{b_i} p_i\cdot\tilde{p}_j \ .\]

\subsubsection{Parameter Selection} \label{subsubsec:ParameterTuning}
In each experiment we fixed the ratio $\frac{B}{C_{SRV,\alpha}}$, the ratio between the work that the offline adversary is willing to do and the work that a legitimate authentication server needs to do on average. If the authentication server uses $k' = C_{SRV,\alpha}$ rounds of deterministic key-stretching then $\frac{B}{C_{SRV,\alpha}}$ denotes the total number of password guesses that the offline adversary can try. In our experiments, we considered the following range of values for this ratio $\frac{B}{C_{SRV,\alpha}}$: $\{10^2, 5\times 10^2, 10^3, 5\times 10^3, 10^4, 5 \times 10^4, 10^5, 5\times 10^5, 10^6, 5 \times 10^6, 10^7, 10^7 + 5\times 10^6, \ldots, 2 \times 10^7, 2.5\times 10^7, 2.65 \times 10^7, 2.8 \times 10^7\}$.  We expect that the adversary's success rate  $\mathcal{P}_{ADV,B}$ will increase as this ratio increases --- a server that is willing to do more work on average during authentication can better protect its users. 

When selecting $k$ and $m$ we need to ensure that $(1-\alpha)km < C_{SRV,\alpha}$. In general, we get more flexibility in our optimization problem by selecting $m$ large and then selecting $k < \frac{C_{SRV,\alpha}}{m(1-\alpha)}$. However, the optimization problem becomes harder to solve as $m$ increases. Thus, we fixed $m=50$ when optimizing CASH while allowing $k$ to vary because we could not find any noticeable improvement in obtained CASH solutions for $m > 50$. We used our CASH optimization algorithm to optimize the value and $k$ as well as $\tilde{p}_1,\ldots,\tilde{p}_m$. }

\subsection{Results}
Our first set of experimental results are shown in Figures \ref{fig:YahooResults1} and \ref{fig:RockYouResults1}. These plots were computed under the assumption that $\alpha = 1$ (users always enter their passwords correctly), and that $\hat{v}=v$ (the defender knows the exact adversary value). The results show that for some (higher) adversary values our non-uniform CASH distributions improves significantly on the cost-equivalent versions of uniform CASH ($50\%$ reduction in cracked passwords) and deterministic key-stretching ($56\%$ reduction in cracked passwords).\footnote{We note that we would expect to see relatively high adversary  values $v/C_{max}$ in the offline setting because $C_{max}$ will typically be quite small (e.g., $\$10^{-6}$). }  Figures \ref{fig:YahooResults90} and \ref{fig:RockYouResults90} (resp. Figures \ref{fig:YahooResults95} and \ref{fig:RockYouResults95}) show the same results under the assumptions that $\alpha = 0.9$ (resp. $\alpha=0.95$). 

Figures \ref{fig:YahooRobust2Results95} and \ref{fig:YahooRobust1Results95} (resp. Figures \ref{fig:RockYouRobust1Results95} and \ref{fig:RockYouRobust2Results95}) explore the effect of a wrong estimate $\hat{v} \neq v$ of the adversary's value for both the RockYou and Yahoo! datasets. Despite receiving the wrong estimate $\hat{v}$ our algorithm returns a distribution that is (almost always) slightly better than the corresponding uniform CASH distribution. Both distributions still significantly outperform the cost equivalent deterministic key-stretching solution. 

\FullVersion{The full version~\cite{fullVersion} of this paper contains additional plots exploring what happens when the defender uses the wrong empirical password distribution when searching for a good CASH distribution $\tilde{p}$ (e.g., if the defender optimizes $\tilde{p}$ under the assumption that the empirical password distribution is given by the Yahoo! dataset when the actual distribution is given by the RockYou dataset).
}{Figures \ref{fig:RockYouResultsOptForYahoo95} and  \ref{fig:YahooResultsOptForRockYou95}  in the appendix explore what happens when the defender uses the wrong empirical password distribution when searching for a good CASH distribution $\tilde{p}$ (e.g., if the defender optimizes $\tilde{p}$ under the assumption that the empirical password distribution is given by the Yahoo! dataset when the actual distribution is given by the RockYou dataset). } Briefly, these plots show that non-uniform CASH significantly outperforms deterministic key-stretching even when non-uniform CASH is optimized under the wrong distribution and non-uniform CASH slightly outperforms uniform CASH on most, but not all, of the curve.

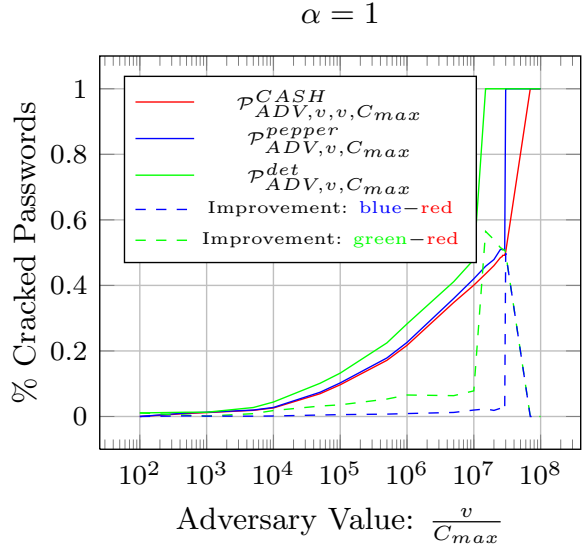
\begin{figure}[!t]
\centering
    \begin{tikzpicture}[scale=1.3]  
   \begin{semilogxaxis}[
    title style={align=center},
    title={ {\small $\alpha=1$ }},
    xlabel={Adversary Value: $\frac{v}{C_{max}}$},
    ylabel={$\%$ Cracked Passwords},
    ylabel shift = -3pt,
    grid=major,
    small,
    cycle list = {{red, mark=none}, {blue, mark=none}, {green, mark=none},{green, dashed, mark=none}, {blue, dashed, mark=none},  {blue, dashed, mark=none}, {red, dashed, mark=none},{brown, dashed, mark=none} },
    legend style = {font=\tiny, at={(.05,.95)}, anchor=north west},
    legend entries = { $\PAdvCASH{v}{v}{C_{max}}$, $\PAdvUnif{v}{C_{max}}$, $\PAdvDet{v}{C_{max}}$, Improvement: \textcolor{blue}{blue}$-$\textcolor{red}{red}, Improvement: \textcolor{green}{green}$-$\textcolor{red}{red}}
   ] 

    \addlegendimage{no markers, red, mark=square*}
    \addlegendimage{no markers, blue, mark=square*}
    \addlegendimage{no markers, green, mark=square*}
    \addlegendimage{no markers, dashed, blue}
    \addlegendimage{no markers, dashed, green}
\addplot coordinates {  (100,0)  (500,0.00901095679831186)  (1000,0.0118882075379984)  (5000,0.0187377670340035)  (10000,0.0260329771951511)  (50000,0.0695289722433699)  (100000,0.0967762178707898)  (500000,0.171395066946001)  (1000000,0.217079290083012)  (5000000,0.347777554099186)  (10000000,0.400698912520617)  (15000000,0.435241555034486)  (20000000,0.460298959755672)  (25000000,0.484752773595168)  (26500000,0.487934163505628)  (27000000,0.491586385753338)  (27500000,0.491586385753338)  (28000000,0.491891527286909)  (29000000,0.491904229037079)  (30000000,0.501829516819533)  (70000000,0.999999999999975)  (100000000,0.999999999999975)  };
\addplot coordinates {  (100,0)  (500,0.0108687224894377)  (1000,0.0130192581998815)  (5000,0.0196877875530742)  (10000,0.0275769138479969)  (50000,0.0740168548263362)  (100000,0.102497142299001)  (500000,0.178594664053884)  (1000000,0.225985726653441)  (5000000,0.360129776428412)  (10000000,0.420317345392629)  (15000000,0.45834913690049)  (20000000,0.478853387778074)  (25000000,0.50975028086399)  (26500000,0.50975028086399)  (27000000,0.50975028086399)  (27500000,0.50975028086399)  (28000000,0.50975028086399)  (29000000,0.50975028086399)  (30000000,1)  (70000000,1)  (100000000,1)  };
 \addplot coordinates {  (100,0.0108687224894377)  (500,0.0130192581998815)  (1000,0.0130192581998815)  (5000,0.0273838584095427)  (10000,0.0441510096695537)  (50000,0.101412040578669)  (100000,0.132372669808665)  (500000,0.224601121331902)  (1000000,0.282624504055384)  (5000000,0.411368138539665)  (10000000,0.478853387778074)  (15000000,1)  (20000000,1)  (25000000,1)  (26500000,1)  (27000000,1)  (27500000,1)  (28000000,1)  (29000000,1)  (30000000,1)  (70000000,1)  (100000000,1)  };
\addplot coordinates {  (100,0.0108687224894377)  (500,0.00400830140156962)  (1000,0.0011310506618831)  (5000,0.00864609137553916)  (10000,0.0181180324744026)  (50000,0.0318830683352988)  (100000,0.0355964519378749)  (500000,0.0532060543859009)  (1000000,0.0655452139723718)  (5000000,0.0635905844404786)  (10000000,0.0781544752574576)  (15000000,0.564758444965514)  (20000000,0.539701040244328)  (25000000,0.515247226404832)  (26500000,0.512065836494372)  (27000000,0.508413614246662)  (27500000,0.508413614246662)  (28000000,0.508108472713091)  (29000000,0.508095770962921)  (30000000,0.498170483180467)  (70000000,2.48689957516035E-14)  (100000000,2.48689957516035E-14)  };
\addplot coordinates {  (100,0)  (500,0.00185776569112582)  (1000,0.0011310506618831)  (5000,0.000950020519070692)  (10000,0.00154393665284583)  (50000,0.00448788258296633)  (100000,0.00572092442821077)  (500000,0.0071995971078832)  (1000000,0.00890643657042919)  (5000000,0.0123522223292253)  (10000000,0.0196184328720127)  (15000000,0.0231075818660044)  (20000000,0.0185544280224019)  (25000000,0.0249975072688218)  (26500000,0.0218161173583619)  (27000000,0.0181638951106518)  (27500000,0.0181638951106518)  (28000000,0.0178587535770803)  (29000000,0.0178460518269108)  (30000000,0.498170483180467)  (70000000,2.48689957516035E-14)  (100000000,2.48689957516035E-14)  };

   \end{semilogxaxis} 
  \end{tikzpicture}
  
  \caption{Yahoo Dataset: $\alpha = 1$.}
\label{fig:YahooResults1}
\end{figure}
 
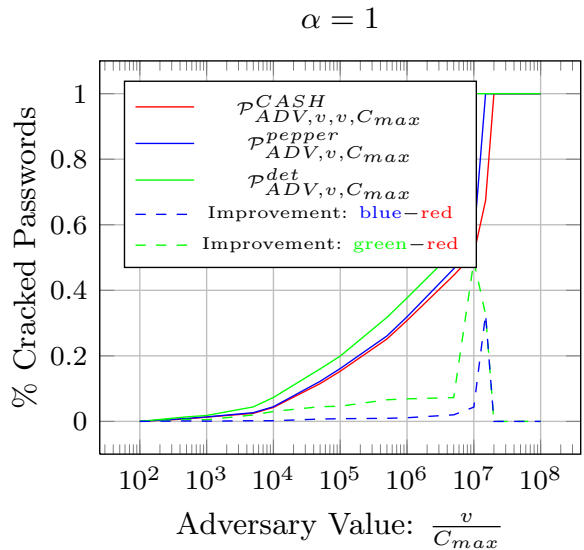
\begin{figure}[!t]
\centering
    \begin{tikzpicture}[scale=1.3]  
   \begin{semilogxaxis}[
    title style={align=center},
    title={ {\small $\alpha=1$ }},
    xlabel={Adversary Value: $\frac{v}{C_{max}}$},
    xlabel={Adversary Value: $\frac{v}{C_{max}}$},
    ylabel={$\%$ Cracked Passwords},
    ylabel shift = -3pt,
    grid=major,
    small,
    cycle list = {{red, mark=none}, {blue, mark=none}, {green, mark=none},{green, dashed, mark=none}, {blue, dashed, mark=none},  {blue, dashed, mark=none}, {red, dashed, mark=none},{brown, dashed, mark=none} },
    legend style = {font=\tiny, at={(.05,.95)}, anchor=north west},
    legend entries = { $\PAdvCASH{v}{v}{C_{max}}$, $\PAdvUnif{v}{C_{max}}$, $\PAdvDet{v}{C_{max}}$, Improvement: \textcolor{blue}{blue}$-$\textcolor{red}{red}, Improvement: \textcolor{green}{green}$-$\textcolor{red}{red}}
   ] 

    \addlegendimage{no markers, red, mark=square*}
    \addlegendimage{no markers, blue, mark=square*}
    \addlegendimage{no markers, green, mark=square*}
    \addlegendimage{no markers, dashed, blue}
    \addlegendimage{no markers, dashed, green}
\addplot coordinates {  (100,0)  (500,0.0076046384413692)  (1000,0.0130127814866306)  (5000,0.0245423032689714)  (10000,0.0427309815113077)  (50000,0.11520218186584)  (100000,0.15297695370972)  (500000,0.251342450785729)  (1000000,0.307887790562047)  (5000000,0.445734691700375)  (10000000,0.510814832636074)  (15000000,0.673482337523552)  (20000000,0.999999999999993)  (25000000,0.999999999999993)  (26500000,0.999999999999993)  (27000000,0.999999999999993)  (27500000,0.999999999999993)  (28000000,0.999999999999993)  (29000000,0.999999999999993)  (30000000,0.999999999999993)  (70000000,0.999999999999993)  (100000000,0.999999999999993)  };
\addplot coordinates {  (100,0)  (500,0.00891714075850031)  (1000,0.0136977788934083)  (5000,0.0265629142590948)  (10000,0.0447626485934529)  (50000,0.12227710813367)  (100000,0.161237660331497)  (500000,0.260629600825534)  (1000000,0.319035371416001)  (5000000,0.465731506185799)  (10000000,0.55428368364662)  (15000000,1)  (20000000,1)  (25000000,1)  (26500000,1)  (27000000,1)  (27500000,1)  (28000000,1)  (29000000,1)  (30000000,1)  (70000000,1)  (100000000,1)  };
 \addplot coordinates {  (100,0)  (500,0.0136977788934083)  (1000,0.0180747779954648)  (5000,0.0441965417827129)  (10000,0.0730608733055595)  (50000,0.160089926850547)  (100000,0.199287847017616)  (500000,0.317302545367371)  (1000000,0.376634078642379)  (5000000,0.518411276766697)  (10000000,1)  (15000000,1)  (20000000,1)  (25000000,1)  (26500000,1)  (27000000,1)  (27500000,1)  (28000000,1)  (29000000,1)  (30000000,1)  (70000000,1)  (100000000,1)  };
\addplot coordinates {  (100,0)  (500,0.00609314045203905)  (1000,0.00506199650883418)  (5000,0.0196542385137415)  (10000,0.0303298917942517)  (50000,0.0448877449847074)  (100000,0.0463108933078961)  (500000,0.065960094581642)  (1000000,0.0687462880803322)  (5000000,0.0726765850663217)  (10000000,0.489185167363926)  (15000000,0.326517662476448)  (20000000,7.105427357601E-15)  (25000000,7.105427357601E-15)  (26500000,7.105427357601E-15)  (27000000,7.105427357601E-15)  (27500000,7.105427357601E-15)  (28000000,7.105427357601E-15)  (29000000,7.105427357601E-15)  (30000000,7.105427357601E-15)  (70000000,7.105427357601E-15)  (100000000,7.105427357601E-15)  };
\addplot coordinates {  (100,0)  (500,0.00131250231713111)  (1000,0.000684997406777665)  (5000,0.00202061099012341)  (10000,0.00203166708214515)  (50000,0.0070749262678306)  (100000,0.00826070662177641)  (500000,0.00928715003980463)  (1000000,0.0111475808539544)  (5000000,0.0199968144854238)  (10000000,0.0434688510105455)  (15000000,0.326517662476448)  (20000000,7.105427357601E-15)  (25000000,7.105427357601E-15)  (26500000,7.105427357601E-15)  (27000000,7.105427357601E-15)  (27500000,7.105427357601E-15)  (28000000,7.105427357601E-15)  (29000000,7.105427357601E-15)  (30000000,7.105427357601E-15)  (70000000,7.105427357601E-15)  (100000000,7.105427357601E-15)  };

   \end{semilogxaxis} 
  \end{tikzpicture}
  
  \caption{RockYou Dataset: $\alpha = 1$.}
\label{fig:RockYouResults1}
\end{figure} 
  
   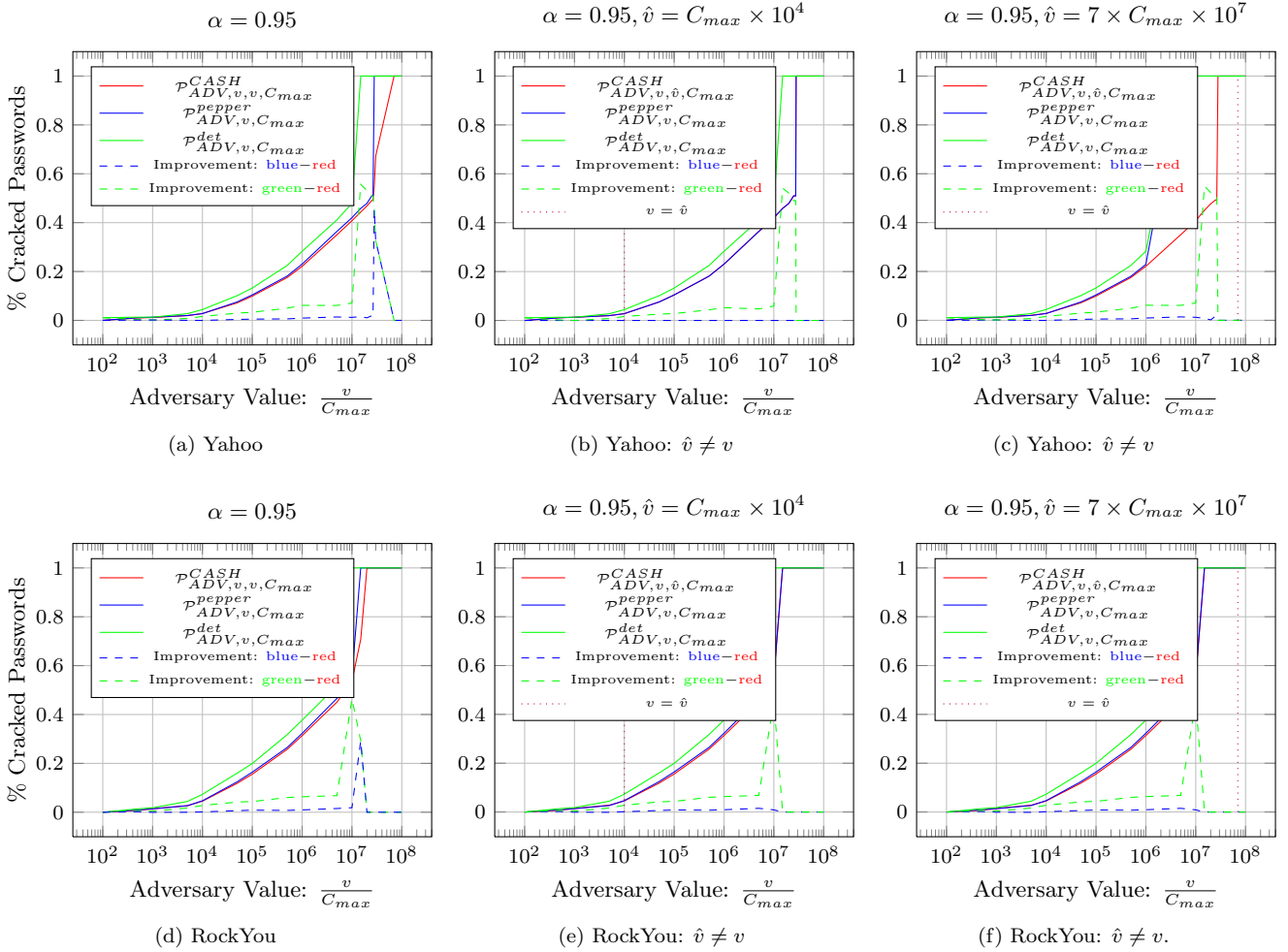
\begin{figure}[!t]
   \centering
\subfloat[Yahoo]{
    \begin{tikzpicture}[scale=1.0]  
   \begin{semilogxaxis}[
    title style={align=center},
    title={ {\small $\alpha=0.95$ }},
    xlabel={Adversary Value: $\frac{v}{C_{max}}$},
    ylabel={$\%$ Cracked Passwords},
    ylabel shift = -3pt,
    grid=major,
    small,
    cycle list = {{red, mark=none}, {blue, mark=none}, {green, mark=none},{green, dashed, mark=none}, {blue, dashed, mark=none},  {blue, dashed, mark=none}, {red, dashed, mark=none},{brown, dashed, mark=none} },
    legend style = {font=\tiny, at={(.05,.95)}, anchor=north west},
    legend entries = { $\PAdvCASH{v}{v}{C_{max}}$, $\PAdvUnif{v}{C_{max}}$, $\PAdvDet{v}{C_{max}}$, Improvement: \textcolor{blue}{blue}$-$\textcolor{red}{red}, Improvement: \textcolor{green}{green}$-$\textcolor{red}{red}}
   ] 

    \addlegendimage{no markers, red, mark=square*}
    \addlegendimage{no markers, blue, mark=square*}
    \addlegendimage{no markers, green, mark=square*}
    \addlegendimage{no markers, dashed, blue}
    \addlegendimage{no markers, dashed, green}
\addplot coordinates {  (100,0)  (500,0.00974816549150774)  (1000,0.0115631568630905)  (5000,0.0194896102497567)  (10000,0.0280671385165211)  (50000,0.071791685890364)  (100000,0.0992383142594701)  (500000,0.175014737387696)  (1000000,0.221098434924485)  (5000000,0.35036457681152)  (10000000,0.407944399818817)  (15000000,0.443322031170548)  (20000000,0.468392975247313)  (25000000,0.490721988368897)  (26500000,0.492844202602487)  (28000000,0.546494585567765)  (30000000,0.673166853909312)  (70000000,0.999999999999971)  (100000000,0.999999999999971)  };
\addplot coordinates {  (100,0)  (500,0.0108687224894377)  (1000,0.0130192581998815)  (5000,0.0196877875530742)  (10000,0.0277625668318636)  (50000,0.0756145584896869)  (100000,0.104286400708258)  (500000,0.181055900840701)  (1000000,0.230322497241287)  (5000000,0.364249437207828)  (10000000,0.420317345392629)  (15000000,0.45834913690049)  (20000000,0.478853387778074)  (25000000,0.50975028086399)  (26500000,0.50975028086399)  (28000000,1)  (30000000,1)  (70000000,1)  (100000000,1)  };
 \addplot coordinates {  (100,0.0108687224894377)  (500,0.0130192581998815)  (1000,0.0130192581998815)  (5000,0.0273838584095427)  (10000,0.0441510096695537)  (50000,0.101412040578669)  (100000,0.132372669808665)  (500000,0.224601121331902)  (1000000,0.282624504055384)  (5000000,0.411368138539665)  (10000000,0.478853387778074)  (15000000,1)  (20000000,1)  (25000000,1)  (26500000,1)  (28000000,1)  (30000000,1)  (70000000,1)  (100000000,1)  };
\addplot coordinates {  (100,0.0108687224894377)  (500,0.00327109270837374)  (1000,0.001456101336791)  (5000,0.00789424815978592)  (10000,0.0160838711530326)  (50000,0.0296203546883047)  (100000,0.0331343555491946)  (500000,0.0495863839442057)  (1000000,0.0615260691308984)  (5000000,0.0610035617281453)  (10000000,0.0709089879592568)  (15000000,0.556677968829452)  (20000000,0.531607024752687)  (25000000,0.509278011631104)  (26500000,0.507155797397513)  (28000000,0.453505414432235)  (30000000,0.326833146090688)  (70000000,2.89768209427166E-14)  (100000000,2.89768209427166E-14)  };
\addplot coordinates {  (100,0)  (500,0.00112055699792995)  (1000,0.001456101336791)  (5000,0.000198177303317452)  (10000,-0.000304571684657504)  (50000,0.00382287259932285)  (100000,0.00504808644878749)  (500000,0.00604116345300501)  (1000000,0.00922406231680187)  (5000000,0.0138848603963081)  (10000000,0.0123729455738119)  (15000000,0.0150271057299424)  (20000000,0.0104604125307615)  (25000000,0.0190282924950931)  (26500000,0.0169060782615023)  (28000000,0.453505414432235)  (30000000,0.326833146090688)  (70000000,2.89768209427166E-14)  (100000000,2.89768209427166E-14)  };

   \end{semilogxaxis} 
  \end{tikzpicture}

\label{fig:YahooResults95}}
 \subfloat[Yahoo: $\hat{v}\neq v$]{   \begin{tikzpicture}[scale=1.0]  
   \begin{semilogxaxis}[
    title style={align=center},
    title={ {\small $\alpha=0.95, \hat{v} = C_{max}\times 10^4$ }},
       xlabel={Adversary Value: $\frac{v}{C_{max}}$},
    ylabel shift = -3pt,
    grid=major,
    small,
    cycle list = {{red, mark=none}, {blue, mark=none}, {green, mark=none},{blue, dashed, mark=none}, {green, dashed, mark=none},  {purple, dotted, mark=none}, {red, dashed, mark=none},{brown, dashed, mark=none} },
    legend style = {font=\tiny, at={(.05,.95)}, anchor=north west},
    legend entries = { $\PAdvCASH{v}{\hat{v}}{C_{max}}$, $\PAdvUnif{v}{C_{max}}$, $\PAdvDet{v}{C_{max}}$, Improvement: \textcolor{blue}{blue}$-$\textcolor{red}{red}, Improvement: \textcolor{green}{green}$-$\textcolor{red}{red}, $v = \hat{v}$  }
   ] 

    \addlegendimage{no markers, red, mark=square*}
    \addlegendimage{no markers, blue, mark=square*}
    \addlegendimage{no markers, green, mark=square*}
    \addlegendimage{no markers, dashed, blue}
    \addlegendimage{no markers, dashed, green}
    \addlegendimage{no markers, dotted, purple}
\addplot coordinates {  (100,0)  (500,0.0108687224894377)  (1000,0.0130192581998815)  (5000,0.0196877875530742)  (10000,0.0277625668318637)  (50000,0.0756145584896891)  (100000,0.104286400708252)  (500000,0.181055900840683)  (1000000,0.230322497241268)  (5000000,0.364249437207804)  (10000000,0.420317345392608)  (15000000,0.458349136900469)  (20000000,0.478853387778053)  (25000000,0.509750280863969)  (26500000,0.509750280863969)  (27000000,0.509750280863969)  (27500000,0.509750280863969)  (28000000,0.999999999999979)  (29000000,0.999999999999979)  (30000000,0.999999999999979)  (70000000,0.999999999999979)  (100000000,0.999999999999979)  };
\addplot coordinates {  (100,0)  (500,0.0108687224894377)  (1000,0.0130192581998815)  (5000,0.0196877875530742)  (10000,0.0277625668318636)  (50000,0.0756145584896869)  (100000,0.104286400708258)  (500000,0.181055900840701)  (1000000,0.230322497241287)  (5000000,0.364249437207828)  (10000000,0.420317345392629)  (15000000,0.45834913690049)  (20000000,0.478853387778074)  (25000000,0.50975028086399)  (26500000,0.50975028086399)  (27000000,0.50975028086399)  (27500000,0.50975028086399)  (28000000,1)  (29000000,1)  (30000000,1)  (70000000,1)  (100000000,1)  };
 \addplot coordinates {  (100,0.0108687224894377)  (500,0.0130192581998815)  (1000,0.0130192581998815)  (5000,0.0273838584095427)  (10000,0.0441510096695537)  (50000,0.101412040578669)  (100000,0.132372669808665)  (500000,0.224601121331902)  (1000000,0.282624504055384)  (5000000,0.411368138539665)  (10000000,0.478853387778074)  (15000000,1)  (20000000,1)  (25000000,1)  (26500000,1)  (27000000,1)  (27500000,1)  (28000000,1)  (29000000,1)  (30000000,1)  (70000000,1)  (100000000,1)  };

\addplot coordinates {  (100,0)  (500,0)  (1000,1.38777878078145E-17)  (5000,-3.46944695195361E-18)  (10000,-1.38777878078145E-16)  (50000,-2.22044604925031E-15)  (100000,5.93969318174459E-15)  (500000,1.77358128183869E-14)  (1000000,1.8846035843012E-14)  (5000000,2.37587727269784E-14)  (10000000,2.17048601314218E-14)  (15000000,2.09832151654155E-14)  (20000000,2.07056594092592E-14)  (25000000,2.1094237467878E-14)  (26500000,2.1094237467878E-14)  (27000000,2.1094237467878E-14)  (27500000,2.1094237467878E-14)  (28000000,2.14273043752655E-14)  (29000000,2.14273043752655E-14)  (30000000,2.14273043752655E-14)  (70000000,2.14273043752655E-14)  (100000000,2.14273043752655E-14)  };
\addplot coordinates {  (100,0.0108687224894377)  (500,0.0021505357104438)  (1000,1.38777878078145E-17)  (5000,0.00769607085646846)  (10000,0.01638844283769)  (50000,0.0257974820889796)  (100000,0.0280862691004131)  (500000,0.0435452204912185)  (1000000,0.0523020068141154)  (5000000,0.0471187013318609)  (10000000,0.0585360423854666)  (15000000,0.541650863099531)  (20000000,0.521146612221947)  (25000000,0.490249719136031)  (26500000,0.490249719136031)  (27000000,0.490249719136031)  (27500000,0.490249719136031)  (28000000,2.14273043752655E-14)  (29000000,2.14273043752655E-14)  (30000000,2.14273043752655E-14)  (70000000,2.14273043752655E-14)  (100000000,2.14273043752655E-14)  };

\addplot coordinates { (10000, 0)  (10000, 0.5)  (10000, 1) };
   \end{semilogxaxis} 
  \end{tikzpicture}  
\label{fig:YahooRobust2Results95}}
\subfloat[Yahoo: $\hat{v}\neq v$]{
    \begin{tikzpicture}[scale=1.0]  
   \begin{semilogxaxis}[
    title style={align=center},
    title={ {\small $\alpha=0.95, \hat{v} = 7\times C_{max} \times 10^7$ }},
       xlabel={Adversary Value: $\frac{v}{C_{max}}$},
    ylabel shift = -3pt,
    grid=major,
    small,
    cycle list = {{red, mark=none}, {blue, mark=none}, {green, mark=none},{blue, dashed, mark=none}, {green, dashed, mark=none},  {purple, dotted, mark=none}, {red, dashed, mark=none},{brown, dashed, mark=none} },
    legend style = {font=\tiny, at={(.05,.95)}, anchor=north west},
    legend entries = { $\PAdvCASH{v}{\hat{v}}{C_{max}}$, $\PAdvUnif{v}{C_{max}}$, $\PAdvDet{v}{C_{max}}$, Improvement: \textcolor{blue}{blue}$-$\textcolor{red}{red}, Improvement: \textcolor{green}{green}$-$\textcolor{red}{red}, $v = \hat{v}$  }
   ] 
   
    \addlegendimage{no markers, red, mark=square*}
    \addlegendimage{no markers, blue, mark=square*}
    \addlegendimage{no markers, green, mark=square*}
    \addlegendimage{no markers, dashed, blue}
    \addlegendimage{no markers, dashed, green}
    \addlegendimage{no markers, dotted, purple}
\addplot coordinates {  (100,0)  (500,0.0102822813124222)  (1000,0.0115631568630905)  (5000,0.0202897976021)  (10000,0.0283029983493156)  (50000,0.071791685890364)  (100000,0.0992383142594701)  (500000,0.175014737387696)  (1000000,0.221098434924485)  (5000000,0.35036457681152)  (10000000,0.407944399818817)  (15000000,0.450602587267483)  (20000000,0.477161630108202)  (25000000,0.492631378777674)  (26500000,0.492844202602487)  (28000000,0.999999999999971)  (30000000,0.999999999999971)  (70000000,0.999999999999971)  (100000000,0.999999999999971)  };

 \addplot coordinates {  (100,0)  (500,0.0108687224894377)  (1000,0.0130192581998815)  (5000,0.0196877875530742)  (10000,0.0277625668318636)  (50000,0.0751253327190499)  (100000,0.103779181056781)  (500000,0.180114995472598)  (1000000,0.228825830589675)  (5000000,1)  (10000000,1)  (15000000,1)  (20000000,1)  (25000000,1)  (26500000,1)  (28000000,1)  (30000000,1)  (70000000,1)  (100000000,1)  };
\addplot coordinates {  (100,0.0108687224894377)  (500,0.0130192581998815)  (1000,0.0130192581998815)  (5000,0.0273838584095427)  (10000,0.0441510096695537)  (50000,0.101412040578669)  (100000,0.132372669808665)  (500000,0.224601121331902)  (1000000,0.282624504055384)  (5000000,1)  (10000000,1)  (15000000,1)  (20000000,1)  (25000000,1)  (26500000,1)  (28000000,1)  (30000000,1)  (70000000,1)  (100000000,1)  };
\addplot coordinates {  (100,0)  (500,0.000586441177015484)  (1000,0.001456101336791)  (5000,-0.000602010049025815)  (10000,-0.000540431517452054)  (50000,0.00382287259932285)  (100000,0.00504808644878749)  (500000,0.00604116345300501)  (1000000,0.00922406231680187)  (5000000,0.0138848603963081)  (10000000,0.0123729455738119)  (15000000,0.00774654963300692)  (20000000,0.00169175766987201)  (25000000,0.017118902086316)  (26500000,0.0169060782615023)  (28000000,2.89768209427166E-14)  (30000000,2.89768209427166E-14)  (70000000,2.89768209427166E-14)  (100000000,2.89768209427166E-14)  };
\addplot coordinates {  (100,0.0108687224894377)  (500,0.00273697688745928)  (1000,0.001456101336791)  (5000,0.00709406080744265)  (10000,0.0158480113202381)  (50000,0.0296203546883047)  (100000,0.0331343555491946)  (500000,0.0495863839442057)  (1000000,0.0615260691308984)  (5000000,0.0610035617281453)  (10000000,0.0709089879592568)  (15000000,0.549397412732517)  (20000000,0.522838369891798)  (25000000,0.507368621222326)  (26500000,0.507155797397513)  (28000000,2.89768209427166E-14)  (30000000,2.89768209427166E-14)  (70000000,2.89768209427166E-14)  (100000000,2.89768209427166E-14)  };

\addplot coordinates { (70000000, 0)  (70000000, 0.5)  (70000000, 1) };
   \end{semilogxaxis} 
  \end{tikzpicture}
  
\label{fig:YahooRobust1Results95}}
 
\subfloat[RockYou]{
    \begin{tikzpicture}[scale=1.0]  
   \begin{semilogxaxis}[
    title style={align=center},
    title={ {\small $\alpha=0.95$ }},
    xlabel={Adversary Value: $\frac{v}{C_{max}}$},
    ylabel={$\%$ Cracked Passwords},
    ylabel shift = -3pt,
    grid=major,
    small,
    cycle list = {{red, mark=none}, {blue, mark=none}, {green, mark=none},{green, dashed, mark=none}, {blue, dashed, mark=none},  {blue, dashed, mark=none}, {red, dashed, mark=none},{brown, dashed, mark=none} },
    legend style = {font=\tiny, at={(.05,.95)}, anchor=north west},
    legend entries = { $\PAdvCASH{v}{v}{C_{max}}$, $\PAdvUnif{v}{C_{max}}$, $\PAdvDet{v}{C_{max}}$, Improvement: \textcolor{blue}{blue}$-$\textcolor{red}{red}, Improvement: \textcolor{green}{green}$-$\textcolor{red}{red}}
   ] 

    \addlegendimage{no markers, red, mark=square*}
    \addlegendimage{no markers, blue, mark=square*}
    \addlegendimage{no markers, green, mark=square*}
    \addlegendimage{no markers, dashed, blue}
    \addlegendimage{no markers, dashed, green}
\addplot coordinates {  (100,0)  (500,0.00703895183690825)  (1000,0.0141148535816532)  (5000,0.0267722121514504)  (10000,0.0454018848661649)  (50000,0.118439836836895)  (100000,0.155703757358036)  (500000,0.257165070416829)  (1000000,0.313403406714585)  (5000000,0.450583879056243)  (10000000,0.535863032524142)  (15000000,0.707445744531801)  (20000000,0.999999999999972)  (25000000,0.999999999999972)  (26500000,0.999999999999972)  (28000000,0.999999999999972)  (30000000,0.999999999999972)  (70000000,0.999999999999972)  (100000000,0.999999999999972)  };
\addplot coordinates {  (100,0)  (500,0.00891714075850031)  (1000,0.0136977788934083)  (5000,0.0265629142590948)  (10000,0.0465878576790853)  (50000,0.124664436714369)  (100000,0.164009427486493)  (500000,0.264563609156202)  (1000000,0.322603129466177)  (5000000,0.465731506185799)  (10000000,0.55428368364662)  (15000000,1)  (20000000,1)  (25000000,1)  (26500000,1)  (28000000,1)  (30000000,1)  (70000000,1)  (100000000,1)  };
 \addplot coordinates {  (100,0)  (500,0.0136977788934083)  (1000,0.0180747779954648)  (5000,0.0441965417827129)  (10000,0.0730608733055595)  (50000,0.160089926850547)  (100000,0.199287847017616)  (500000,0.317302545367371)  (1000000,0.376634078642379)  (5000000,0.518411276766697)  (10000000,1)  (15000000,1)  (20000000,1)  (25000000,1)  (26500000,1)  (28000000,1)  (30000000,1)  (70000000,1)  (100000000,1)  };
\addplot coordinates {  (100,0)  (500,0.0066588270565)  (1000,0.0039599244138116)  (5000,0.0174243296312625)  (10000,0.0276589884393946)  (50000,0.0416500900136517)  (100000,0.0435840896595805)  (500000,0.0601374749505419)  (1000000,0.0632306719277944)  (5000000,0.0678273977104537)  (10000000,0.464136967475858)  (15000000,0.292554255468199)  (20000000,2.75335310107039E-14)  (25000000,2.75335310107039E-14)  (26500000,2.75335310107039E-14)  (28000000,2.75335310107039E-14)  (30000000,2.75335310107039E-14)  (70000000,2.75335310107039E-14)  (100000000,2.75335310107039E-14)  };
\addplot coordinates {  (100,0)  (500,0.00187818892159206)  (1000,-0.000417074688244912)  (5000,-0.000209297892355589)  (10000,0.00118597281292042)  (50000,0.00622459987747334)  (100000,0.00830567012845707)  (500000,0.00739853873937257)  (1000000,0.00919972275159225)  (5000000,0.0151476271295558)  (10000000,0.0184206511224776)  (15000000,0.292554255468199)  (20000000,2.75335310107039E-14)  (25000000,2.75335310107039E-14)  (26500000,2.75335310107039E-14)  (28000000,2.75335310107039E-14)  (30000000,2.75335310107039E-14)  (70000000,2.75335310107039E-14)  (100000000,2.75335310107039E-14)  };

   \end{semilogxaxis} 
  \end{tikzpicture}

\label{fig:RockYouResults95}}
\subfloat[RockYou: $\hat{v} \neq v$]{
    \begin{tikzpicture}[scale=1.0]  
   \begin{semilogxaxis}[
    title style={align=center},
    title={ {\small $\alpha=0.95, \hat{v} = C_{max}\times 10^4$ }},
       xlabel={Adversary Value: $\frac{v}{C_{max}}$},
    ylabel shift = -3pt,
    grid=major,
    small,
    cycle list = {{red, mark=none}, {blue, mark=none}, {green, mark=none},{blue, dashed, mark=none}, {green, dashed, mark=none},  {purple, dotted, mark=none}, {red, dashed, mark=none},{brown, dashed, mark=none} },
    legend style = {font=\tiny, at={(.05,.95)}, anchor=north west},
    legend entries = { $\PAdvCASH{v}{\hat{v}}{C_{max}}$, $\PAdvUnif{v}{C_{max}}$, $\PAdvDet{v}{C_{max}}$, Improvement: \textcolor{blue}{blue}$-$\textcolor{red}{red}, Improvement: \textcolor{green}{green}$-$\textcolor{red}{red}, $v = \hat{v}$  }
   ] 

    \addlegendimage{no markers, red, mark=square*}
    \addlegendimage{no markers, blue, mark=square*}
    \addlegendimage{no markers, green, mark=square*}
    \addlegendimage{no markers, dashed, blue}
    \addlegendimage{no markers, dashed, green}
    \addlegendimage{no markers, dotted, purple}
\addplot coordinates {  (100,0)  (500,0.00703895183690825)  (1000,0.0145237093081992)  (5000,0.0278110594326778)  (10000,0.0454018848661649)  (50000,0.118439836836895)  (100000,0.155703757358036)  (500000,0.257165070416829)  (1000000,0.313403406714585)  (5000000,0.450583879056243)  (10000000,0.545506415413783)  (15000000,0.999999999999972)  (20000000,0.999999999999972)  (25000000,0.999999999999972)  (26500000,0.999999999999972)  (27000000,0.999999999999972)  (27500000,0.999999999999972)  (28000000,0.999999999999972)  (29000000,0.999999999999972)  (30000000,0.999999999999972)  (70000000,0.999999999999972)  (100000000,0.999999999999972)  };
\addplot coordinates {  (100,0)  (500,0.00891714075850031)  (1000,0.0136977788934083)  (5000,0.0265629142590948)  (10000,0.0465878576790853)  (50000,0.124664436714369)  (100000,0.164009427486493)  (500000,0.264563609156202)  (1000000,0.322603129466177)  (5000000,0.465731506185799)  (10000000,0.55428368364662)  (15000000,1)  (20000000,1)  (25000000,1)  (26500000,1)  (27000000,1)  (27500000,1)  (28000000,1)  (29000000,1)  (30000000,1)  (70000000,1)  (100000000,1)  };
 \addplot coordinates {  (100,0)  (500,0.0136977788934083)  (1000,0.0180747779954648)  (5000,0.0441965417827129)  (10000,0.0730608733055595)  (50000,0.160089926850547)  (100000,0.199287847017616)  (500000,0.317302545367371)  (1000000,0.376634078642379)  (5000000,0.518411276766697)  (10000000,1)  (15000000,1)  (20000000,1)  (25000000,1)  (26500000,1)  (27000000,1)  (27500000,1)  (28000000,1)  (29000000,1)  (30000000,1)  (70000000,1)  (100000000,1)  };

\addplot coordinates {  (100,0)  (500,0.00187818892159206)  (1000,-0.000825930414790916)  (5000,-0.00124814517358296)  (10000,0.00118597281292042)  (50000,0.00622459987747334)  (100000,0.00830567012845707)  (500000,0.00739853873937257)  (1000000,0.00919972275159225)  (5000000,0.0151476271295558)  (10000000,0.00877726823283698)  (15000000,2.75335310107039E-14)  (20000000,2.75335310107039E-14)  (25000000,2.75335310107039E-14)  (26500000,2.75335310107039E-14)  (27000000,2.75335310107039E-14)  (27500000,2.75335310107039E-14)  (28000000,2.75335310107039E-14)  (29000000,2.75335310107039E-14)  (30000000,2.75335310107039E-14)  (70000000,2.75335310107039E-14)  (100000000,2.75335310107039E-14)  };
\addplot coordinates {  (100,0)  (500,0.0066588270565)  (1000,0.00355106868726559)  (5000,0.0163854823500351)  (10000,0.0276589884393946)  (50000,0.0416500900136517)  (100000,0.0435840896595805)  (500000,0.0601374749505419)  (1000000,0.0632306719277944)  (5000000,0.0678273977104537)  (10000000,0.454493584586217)  (15000000,2.75335310107039E-14)  (20000000,2.75335310107039E-14)  (25000000,2.75335310107039E-14)  (26500000,2.75335310107039E-14)  (27000000,2.75335310107039E-14)  (27500000,2.75335310107039E-14)  (28000000,2.75335310107039E-14)  (29000000,2.75335310107039E-14)  (30000000,2.75335310107039E-14)  (70000000,2.75335310107039E-14)  (100000000,2.75335310107039E-14)  };

\addplot coordinates { (10000, 0)  (10000, 0.5)  (10000, 1) };

   \end{semilogxaxis} 
  \end{tikzpicture}
  
\label{fig:RockYouRobust2Results95}
}
\subfloat[RockYou: $\hat{v} \neq v$.]{
    \begin{tikzpicture}[scale=1.0]  
   \begin{semilogxaxis}[
    title style={align=center},
    title={ {\small $\alpha=0.95, \hat{v} = 7\times C_{max} \times 10^7$ }},
       xlabel={Adversary Value: $\frac{v}{C_{max}}$},
    ylabel shift = -3pt,
    grid=major,
    small,
    cycle list = {{red, mark=none}, {blue, mark=none}, {green, mark=none},{blue, dashed, mark=none}, {green, dashed, mark=none},  {purple, dotted, mark=none}, {red, dashed, mark=none},{brown, dashed, mark=none} },
    legend style = {font=\tiny, at={(.05,.95)}, anchor=north west},
    legend entries = { $\PAdvCASH{v}{\hat{v}}{C_{max}}$, $\PAdvUnif{v}{C_{max}}$, $\PAdvDet{v}{C_{max}}$, Improvement: \textcolor{blue}{blue}$-$\textcolor{red}{red}, Improvement: \textcolor{green}{green}$-$\textcolor{red}{red}, $v = \hat{v}$  }
   ] 

    \addlegendimage{no markers, red, mark=square*}
    \addlegendimage{no markers, blue, mark=square*}
    \addlegendimage{no markers, green, mark=square*}
    \addlegendimage{no markers, dashed, blue}
    \addlegendimage{no markers, dashed, green}
    \addlegendimage{no markers, dotted, purple}
\addplot coordinates {  (100,0)  (500,0.00703895183690825)  (1000,0.0145237093081992)  (5000,0.0278110594326778)  (10000,0.0454018848661649)  (50000,0.118439836836895)  (100000,0.155703757358036)  (500000,0.257165070416829)  (1000000,0.313403406714585)  (5000000,0.450583879056243)  (10000000,0.545506415413783)  (15000000,0.999999999999972)  (20000000,0.999999999999972)  (25000000,0.999999999999972)  (26500000,0.999999999999972)  (27000000,0.999999999999972)  (27500000,0.999999999999972)  (28000000,0.999999999999972)  (29000000,0.999999999999972)  (30000000,0.999999999999972)  (70000000,0.999999999999972)  (100000000,0.999999999999972)  };

\addplot coordinates {  (100,0)  (500,0.00891714075850031)  (1000,0.0136977788934083)  (5000,0.0265629142590948)  (10000,0.0465878576790853)  (50000,0.124664436714369)  (100000,0.164009427486493)  (500000,0.264563609156202)  (1000000,0.322603129466177)  (5000000,0.465731506185799)  (10000000,0.55428368364662)  (15000000,1)  (20000000,1)  (25000000,1)  (26500000,1)  (27000000,1)  (27500000,1)  (28000000,1)  (29000000,1)  (30000000,1)  (70000000,1)  (100000000,1)  };
 \addplot coordinates {  (100,0)  (500,0.0136977788934083)  (1000,0.0180747779954648)  (5000,0.0441965417827129)  (10000,0.0730608733055595)  (50000,0.160089926850547)  (100000,0.199287847017616)  (500000,0.317302545367371)  (1000000,0.376634078642379)  (5000000,0.518411276766697)  (10000000,1)  (15000000,1)  (20000000,1)  (25000000,1)  (26500000,1)  (27000000,1)  (27500000,1)  (28000000,1)  (29000000,1)  (30000000,1)  (70000000,1)  (100000000,1)  };

\addplot coordinates {  (100,0)  (500,0.00187818892159206)  (1000,-0.000825930414790916)  (5000,-0.00124814517358296)  (10000,0.00118597281292042)  (50000,0.00622459987747334)  (100000,0.00830567012845707)  (500000,0.00739853873937257)  (1000000,0.00919972275159225)  (5000000,0.0151476271295558)  (10000000,0.00877726823283698)  (15000000,2.75335310107039E-14)  (20000000,2.75335310107039E-14)  (25000000,2.75335310107039E-14)  (26500000,2.75335310107039E-14)  (27000000,2.75335310107039E-14)  (27500000,2.75335310107039E-14)  (28000000,2.75335310107039E-14)  (29000000,2.75335310107039E-14)  (30000000,2.75335310107039E-14)  (70000000,2.75335310107039E-14)  (100000000,2.75335310107039E-14)  };
\addplot coordinates {  (100,0)  (500,0.0066588270565)  (1000,0.00355106868726559)  (5000,0.0163854823500351)  (10000,0.0276589884393946)  (50000,0.0416500900136517)  (100000,0.0435840896595805)  (500000,0.0601374749505419)  (1000000,0.0632306719277944)  (5000000,0.0678273977104537)  (10000000,0.454493584586217)  (15000000,2.75335310107039E-14)  (20000000,2.75335310107039E-14)  (25000000,2.75335310107039E-14)  (26500000,2.75335310107039E-14)  (27000000,2.75335310107039E-14)  (27500000,2.75335310107039E-14)  (28000000,2.75335310107039E-14)  (29000000,2.75335310107039E-14)  (30000000,2.75335310107039E-14)  (70000000,2.75335310107039E-14)  (100000000,2.75335310107039E-14)  };

\addplot coordinates { (70000000, 0)  (70000000, 0.5)  (70000000, 1) };
   \end{semilogxaxis} 
  \end{tikzpicture}
  
\label{fig:RockYouRobust1Results95}
}
\caption{$\alpha=0.95$}

\end{figure}

\subsection{Discussion} 
In our experiments we varied the password correctness rate $\alpha \in \{0.9,0.95,1\}$. Intuitively, we expect for CASH to have a greater advantage over traditional key-stretching techniques when $\alpha$ is larger, but when $\alpha \rightarrow 0$ we should not expect for CASH or uniform-CASH to outperform deterministic key-stretching techniques because there is no advantage in making authentication costs asymmetric. It is easier for users to remember passwords that they use frequently\cite{Pimsleur1967,BS14,blockiSpacedRepetition} so we would expect for $\alpha$ to be larger for services that are used frequently (e.g., e-mail). This suggests that larger values of $\alpha$ (e.g., $\alpha = 0.9$ or $\alpha=0.95$) would be appropriate for many services because the users who authenticate most frequently would be the least likely to enter incorrect passwords. While different authentication servers might experience different failed login rates $1-\alpha$, we remark that it is reasonable to assume that the authentication server knows the value of $\alpha$ because it can monitor login attempts.

\paragraph{Estimating $v$} While our results suggest that CASH continues to perform well even if our estimate $\hat{v}$ of the adversary's value $v$ for cracked passwords is wrong, we would still recommend that an authentication server perform a careful economic analysis to obtain the estimate $\hat{v}$ before running Algorithm \ref{alg:HeuristicAlg2} to compute the CASH distribution $\tilde{p}$. The organization should take into account empirical data on the cost $\mathbf{Cost}\left(\mathbf{H}\right)$ of computing the underlying hash function as well as the market value of a cracked password. If possible, we recommend that the organization consider data from black market sales of passwords for similar types of  accounts (e.g., an adversary would likely value a cracked Bank of America password more than a cracked Twitter password). Symantec reports that cracked passwords are sold on the black market for $\$4$--$\$30$~\cite{passwordBlackMarket}. Thus, $\$30/\CostH$ might be a reasonable upper bound on the adversary's value for a cracked password (measured in \# of computations of $\Hash^k$). We would also strongly advocate for the use of memory hard functions instead of hash iteration to increase $\CostH$ effectively (see discussion in Section \ref{subsec:Traditional}).

 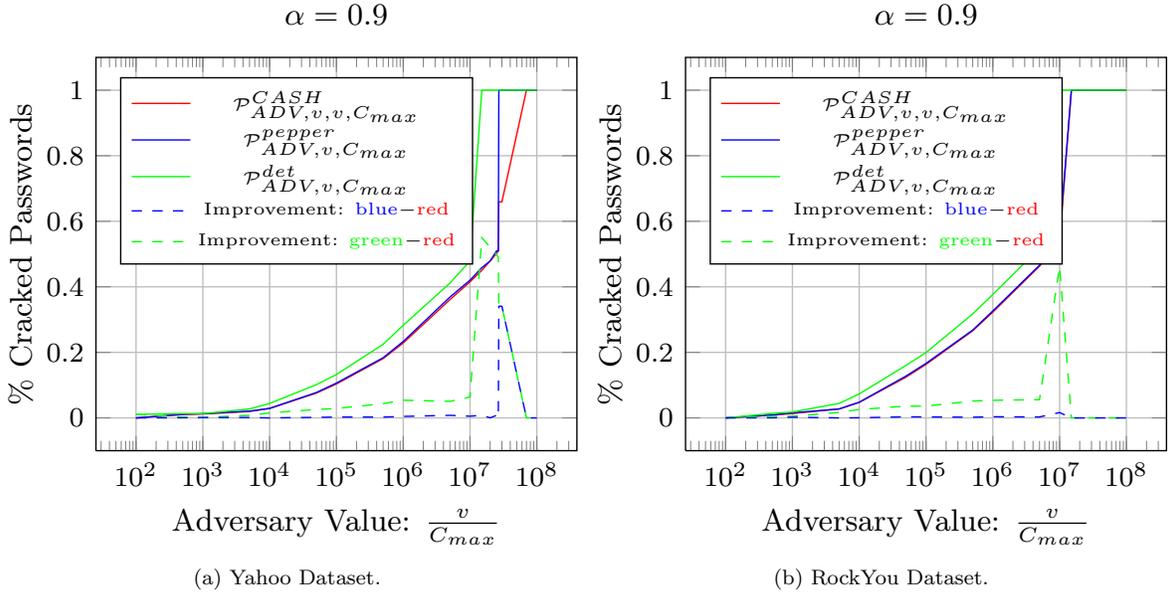
\begin{figure}[!t]
  \subfloat[Yahoo Dataset.]{
      \begin{tikzpicture}[scale=1.3]  
   \begin{semilogxaxis}[
    title style={align=center},
    title={ {\small $\alpha=0.9$ }},
    xlabel={Adversary Value: $\frac{v}{C_{max}}$},
    ylabel={$\%$ Cracked Passwords},
    ylabel shift = -3pt,
    grid=major,
    small,
    cycle list = {{red, mark=none}, {blue, mark=none}, {green, mark=none},{green, dashed, mark=none}, {blue, dashed, mark=none},  {blue, dashed, mark=none}, {red, dashed, mark=none},{brown, dashed, mark=none} },
    legend style = {font=\tiny, at={(.05,.95)}, anchor=north west},
    legend entries = { $\PAdvCASH{v}{v}{C_{max}}$, $\PAdvUnif{v}{C_{max}}$, $\PAdvDet{v}{C_{max}}$, Improvement: \textcolor{blue}{blue}$-$\textcolor{red}{red}, Improvement: \textcolor{green}{green}$-$\textcolor{red}{red}}
   ] 

    \addlegendimage{no markers, red, mark=square*}
    \addlegendimage{no markers, blue, mark=square*}
    \addlegendimage{no markers, green, mark=square*}
    \addlegendimage{no markers, dashed, blue}
    \addlegendimage{no markers, dashed, green}
\addplot coordinates {  (100,0)  (500,0.0103650003300184)  (1000,0.0118193070978086)  (5000,0.0196537617974582)  (10000,0.0289970884688705)  (50000,0.0757213976299667)  (100000,0.103595972530575)  (500000,0.181006843045768)  (1000000,0.228656981329561)  (5000000,0.360821094325625)  (10000000,0.415177925518485)  (15000000,0.450283451843314)  (20000000,0.47863343472216)  (25000000,0.501829516819531)  (26500000,0.509750280863969)  (27000000,0.659338193697864)  (27500000,0.659338193697864)  (28000000,0.659338193697864)  (29000000,0.659338193697864)  (30000000,0.659338193697864)  (70000000,0.999999999999979)  (100000000,0.999999999999979)  };
\addplot coordinates {  (100,0)  (500,0.0108687224894377)  (1000,0.0130192581998815)  (5000,0.0207614032035197)  (10000,0.0289970884688704)  (50000,0.0773727496772537)  (100000,0.106225338769438)  (500000,0.183571148129509)  (1000000,0.233240940214472)  (5000000,0.368658241037976)  (10000000,0.420317345392629)  (15000000,0.45834913690049)  (20000000,0.478853387778074)  (25000000,0.50975028086399)  (26500000,0.50975028086399)  (27000000,1)  (27500000,1)  (28000000,1)  (29000000,1)  (30000000,1)  (70000000,1)  (100000000,1)  };
 \addplot coordinates {  (100,0.0108687224894377)  (500,0.0130192581998815)  (1000,0.0130192581998815)  (5000,0.0273838584095427)  (10000,0.0441510096695537)  (50000,0.101412040578669)  (100000,0.132372669808665)  (500000,0.224601121331902)  (1000000,0.282624504055384)  (5000000,0.411368138539665)  (10000000,0.478853387778074)  (15000000,1)  (20000000,1)  (25000000,1)  (26500000,1)  (27000000,1)  (27500000,1)  (28000000,1)  (29000000,1)  (30000000,1)  (70000000,1)  (100000000,1)  };
\addplot coordinates {  (100,0.0108687224894377)  (500,0.00265425786986305)  (1000,0.00119995110207291)  (5000,0.00773009661208442)  (10000,0.0151539212006832)  (50000,0.025690642948702)  (100000,0.0287766972780896)  (500000,0.0435942782861336)  (1000000,0.053967522725823)  (5000000,0.0505470442140399)  (10000000,0.0636754622595894)  (15000000,0.549716548156686)  (20000000,0.52136656527784)  (25000000,0.498170483180469)  (26500000,0.490249719136031)  (27000000,0.340661806302136)  (27500000,0.340661806302136)  (28000000,0.340661806302136)  (29000000,0.340661806302136)  (30000000,0.340661806302136)  (70000000,2.14273043752655E-14)  (100000000,2.14273043752655E-14)  };
\addplot coordinates {  (100,0)  (500,0.000503722159419252)  (1000,0.00119995110207291)  (5000,0.00110764140606149)  (10000,-1.49186218934005E-16)  (50000,0.00165135204728702)  (100000,0.00262936623886303)  (500000,0.00256430508374061)  (1000000,0.00458395888491178)  (5000000,0.00783714671235103)  (10000000,0.00513941987414457)  (15000000,0.0080656850571757)  (20000000,0.000219953055913658)  (25000000,0.00792076404445874)  (26500000,2.1094237467878E-14)  (27000000,0.340661806302136)  (27500000,0.340661806302136)  (28000000,0.340661806302136)  (29000000,0.340661806302136)  (30000000,0.340661806302136)  (70000000,2.14273043752655E-14)  (100000000,2.14273043752655E-14)  };

   \end{semilogxaxis} 
  \end{tikzpicture}
  
\label{fig:YahooResults90}}
\subfloat[RockYou Dataset.]{
    \begin{tikzpicture}[scale=1.3]  
   \begin{semilogxaxis}[
    title style={align=center},
    title={ {\small $\alpha=0.9$ }},
    xlabel={Adversary Value: $\frac{v}{C_{max}}$},
    ylabel={$\%$ Cracked Passwords},
    ylabel shift = -3pt,
    grid=major,
    small,
    cycle list = {{red, mark=none}, {blue, mark=none}, {green, mark=none},{green, dashed, mark=none}, {blue, dashed, mark=none},  {blue, dashed, mark=none}, {red, dashed, mark=none},{brown, dashed, mark=none} },
    legend style = {font=\tiny, at={(.05,.95)}, anchor=north west},
    legend entries = { $\PAdvCASH{v}{v}{C_{max}}$, $\PAdvUnif{v}{C_{max}}$, $\PAdvDet{v}{C_{max}}$, Improvement: \textcolor{blue}{blue}$-$\textcolor{red}{red}, Improvement: \textcolor{green}{green}$-$\textcolor{red}{red}}
   ] 

    \addlegendimage{no markers, red, mark=square*}
    \addlegendimage{no markers, blue, mark=square*}
    \addlegendimage{no markers, green, mark=square*}
    \addlegendimage{no markers, dashed, blue}
    \addlegendimage{no markers, dashed, green}
\addplot coordinates {  (100,0)  (500,0.00859451942407725)  (1000,0.0133447400377569)  (5000,0.0272760303315716)  (10000,0.0474032707627518)  (50000,0.124170424428517)  (100000,0.163294928646465)  (500000,0.265801301128427)  (1000000,0.322799966527542)  (5000000,0.462803564928199)  (10000000,0.538000264865568)  (15000000,0.999999999999978)  (20000000,0.999999999999978)  (25000000,0.999999999999978)  (26500000,0.999999999999978)  (27000000,0.999999999999978)  (27500000,0.999999999999978)  (28000000,0.999999999999978)  (29000000,0.999999999999978)  (30000000,0.999999999999978)  (70000000,0.999999999999978)  (100000000,0.999999999999978)  };
\addplot coordinates {  (100,0)  (500,0.00891714075850031)  (1000,0.0155215770827253)  (5000,0.0272760303315717)  (10000,0.0478123623225905)  (50000,0.127514539286531)  (100000,0.166347957457673)  (500000,0.267932308139264)  (1000000,0.326354365380677)  (5000000,0.465731506185799)  (10000000,0.55428368364662)  (15000000,1)  (20000000,1)  (25000000,1)  (26500000,1)  (27000000,1)  (27500000,1)  (28000000,1)  (29000000,1)  (30000000,1)  (70000000,1)  (100000000,1)  };
 \addplot coordinates {  (100,0)  (500,0.0136977788934083)  (1000,0.0180747779954648)  (5000,0.0441965417827129)  (10000,0.0730608733055595)  (50000,0.160089926850547)  (100000,0.199287847017616)  (500000,0.317302545367371)  (1000000,0.376634078642379)  (5000000,0.518411276766697)  (10000000,1)  (15000000,1)  (20000000,1)  (25000000,1)  (26500000,1)  (27000000,1)  (27500000,1)  (28000000,1)  (29000000,1)  (30000000,1)  (70000000,1)  (100000000,1)  };
\addplot coordinates {  (100,0)  (500,0.00510325946933101)  (1000,0.00473003795770787)  (5000,0.0169205114511413)  (10000,0.0256576025428077)  (50000,0.0359195024220297)  (100000,0.0359929183711517)  (500000,0.0515012442389441)  (1000000,0.0538341121148372)  (5000000,0.0556077118384979)  (10000000,0.461999735134432)  (15000000,2.22044604925031E-14)  (20000000,2.22044604925031E-14)  (25000000,2.22044604925031E-14)  (26500000,2.22044604925031E-14)  (27000000,2.22044604925031E-14)  (27500000,2.22044604925031E-14)  (28000000,2.22044604925031E-14)  (29000000,2.22044604925031E-14)  (30000000,2.22044604925031E-14)  (70000000,2.22044604925031E-14)  (100000000,2.22044604925031E-14)  };
\addplot coordinates {  (100,0)  (500,0.000322621334423063)  (1000,0.00217683704496837)  (5000,8.32667268468867E-17)  (10000,0.000409091559838717)  (50000,0.00334411485801335)  (100000,0.00305302881120795)  (500000,0.0021310070108374)  (1000000,0.00355439885313574)  (5000000,0.00292794125760004)  (10000000,0.0162834187810521)  (15000000,2.22044604925031E-14)  (20000000,2.22044604925031E-14)  (25000000,2.22044604925031E-14)  (26500000,2.22044604925031E-14)  (27000000,2.22044604925031E-14)  (27500000,2.22044604925031E-14)  (28000000,2.22044604925031E-14)  (29000000,2.22044604925031E-14)  (30000000,2.22044604925031E-14)  (70000000,2.22044604925031E-14)  (100000000,2.22044604925031E-14)  };

   \end{semilogxaxis} 
  \end{tikzpicture}
  
\label{fig:RockYouResults90}}
\centering
\caption{$\alpha = 0.9$.}
\end{figure}

\paragraph{Obtaining an Empirical Password Distribution} We remark that the specific CASH distributions we computed for the RockYou and Yahoo! datasets might not be optimal in other application settings because the underlying password distribution may vary across different contexts. For example, users might be more motivated to pick strong passwords for higher value accounts (e.g., bank accounts). Similarly, some organizations choose to restrict the passwords that a user may select (e.g., requiring upper and lower case letters). While these restrictions do not always result in stronger passwords~\cite{usability:compositionPolicies}, they can alter the underlying password distribution~\cite{blockiPasswordComposition}. While the underlying distribution may vary from context to context, we note that an authentication server could always follow the framework of Bonneau~\cite{bonneau2012science} and Blocki et al.~\cite{blocki2016differentially} to securely approximate the password distribution $p_1,\ldots,p_n$ of its own users. 

If an organization remains highly uncertain about value $v$ of a cracked password or about the empirical password distribution $p_1,\ldots, p_n$ then it may be prudent to adopt the uniform-CASH mechanism (e.g., \cite{manber1996simple}), which {\em always } performs at least as well as the traditional key-stretching approach.

\subsubsection{Experimental Limitations} \label{subsubsec:Limitations}
We remark that values of $\PAdvCASH{v}{\hat{v}}{C}$ that we compute in our experiments may be less realistic for larger values of $\frac{v}{\ServerCost{\alpha}}$ (e.g., $10^8$). The reason is that $p_i$, our empirical estimate of the probability of password $pwd_i$, will be too high for many of our unique passwords in the dataset. For example, consider a dedicated user who memorizes a truly random $20$ character string of upper and lower case letters. The true probability that any individual password guess matches the user's password would be at most $1/52^{20} \approx 1/\left(2.09 \times 10^{34} \right) $. However, if that password occurred in the RockYou dataset then our empirical estimate of this probability would be at least $1/\left(3.26\times 10^7\right)$. Developing improved techniques for estimating the true likelihood of unique password in a password frequency dataset is an important research direction.

\section{Related Work} \label{sec:Related}
\noindent {\bf Breaches.} Recent breaches~\cite{breach:Adobe,breach:IEEE,breach:Zappos,breach:ebay,breach:linkedin,breach:sony,breach:rockyou} highlight the importance of proper password storage. In one of these instances~\cite{breach:rockyou} passwords were stored on the authentication server in cleartext and in other instances the passwords were not salted~\cite{breach:linkedin}. Salting is a simple, yet effective, way to defend against rainbow table attacks~\cite{salt}, which can be used to dramatically reduce the cost of an offline attack against unsalted passwords~\cite{rainbowTable}. Bonneau and Preibusch~\cite{bonneau2010password} found that implementation errors like these are unfortunately commonplace. 

\noindent {\bf Key Stretching.} The practice of key stretching was proposed as early as 1979 by Morris and Thompson~\cite{morris1979password}. The goal is to make the hash function more expensive to evaluate so that an offline attack is more expensive for the adversary. PBKDF2~\cite{kaliski2000pkcs}, BCRYPT~\cite{bcrypt} use hash iteration to accomplish this goal. The recent Ashley Madison breach highlights the benefits of key-stretching in practice. Through an implementation mistake half of the Ashley Madison passwords were protected with the MD5 hash function instead of the much stronger BCRYPT hash function allowing offline password crackers to quickly recover these passwords\footnote{See,  \url{http://arstechnica.com/security/2015/09/once-seen-as-bulletproof-11-million-ashley-madison-passwords-already-cracked/} (retrieved 5/4/2016)}.

More modern password hash functions like SCRYPT~\cite{percival2012scrypt} use memory hard functions for key-stretching. Recently, the Password Hashing Competition\cite{PHC} was developed to encourage the development of alternative password hashing schemes (e.g., \cite{almeida2014lyra,forler2013catena}). Argon2~\cite{Argon2}, the winner, has a parameters which control memory usage and parallelism. Deterministic key-stretching methods result in proportionally increased costs for the legitimate server as well as the adversary. Manber~\cite{manber1996simple} proposed the use of hidden salt values (e.g., `pepper') to make it more expensive to reject incorrect passwords. CASH may be viewed as a generalization of this idea. Boyen~\cite{boyen2007halting} proposed using halting puzzles to introduce an extreme asymmetry --- the password verification algorithm never halts when we try an incorrect password. However, in practice an authentication server will need to upper bound the maximum running time for authentication because even legitimate users may occasionally enter the wrong password. 

\noindent {\bf Other Defenses Against Offline Attacks.} If an organization has multiple servers for authentication then it is possible to distribute storage of the passwords across multiple servers to keep them safe from an adversary who only breaches one server (e.g., see \cite{brainard2003new} or ~\cite{camenisch2012practical}). Juels and Rivest~\cite{rivestHoneywords} proposed storing the hashes of fake passwords (honeywords) and using a second auxiliary server to detect an offline attack (authentication attempt with honeywords). Another line of research has sought to include the solution(s) to hard artificial intelligence problems in the password hash so that an offline attacker needs human assistance~\cite{canetti2006mitigating,poshExperiment,blockiGOTCHA}. By contrast, CASH does not require an organization to purchase and maintain multiple (distributed) authentication servers and it could be adopted without altering the user's authentication experience (e.g., by requiring the user to solve CAPTCHAs). 

\noindent {\bf Measuring Password Strength.} Guessing-entropy~\cite{shannon1959mathematical,massey1994guessing}, $\sum_{i=1}^n i\times p_i$, measures the average number of guesses needed to crack a single password. We use a similar formula to compute how much work a threshold-$B$ adversary would do in expectation. Guessing-entropy and Shannon-entropy are known to be poor metrics for measuring password strength\footnote{Guessing-entropy could be high even if half of our users choose the same password ($p_1 = 0.5$) as long as the other half of our users choose a password uniformly at random from $\PasswordSpace$ $\left(p_2 = \ldots = p_n = \frac{2}{n-1}\right)$. } While minimum entropy, $H_{\infty}=-\log p_1$, can be used to estimate the fraction of passwords that could be  cracked in an online attack~\cite{blockiPasswordComposition}, it can provide an overly pessimistic security measurement in general. 

 Boztas~\cite{boztas1999entropies} proposed a metric called $\beta$-guesswork, which measured the success rate for an adversary with $\beta$ guesses per account $\sum_{i=1}^{\beta} p_i$. We use a similar formula for computing the success rate of a threshold-$B$ adversary  against our CASH mechanism --- the key difference is that the adversary must guess the random value $t_u$ as well as the user's password $pwd_u$. Pliam's proposed a similar metric called $\alpha$-guesswork~\cite{pliam2000incomparability}, which measures the number of password guesses the adversary would need  (per user) to achieve success rate $\alpha$. 

\noindent {\bf Encouraging Users to Memorize Stronger Passwords.} A separate line of research has focused on helping users memorize stronger passwords using various mnemonic techniques and/or rehearsal techniques (e.g.,~\cite{blockiNaturallyRehearsingPasswords,BS14,usabilitystudy:xkcd,Hertzum:2006:minimalFeedbackHintsforPasswords}). 

Password managers seek to minimize user burden by using a single password to generate multiple passwords~\cite{ross2005stronger}. These password managers often use client-side key stretching to derive each password. While CASH is a useful tool for server-side key stretching, our current version of CASH is not appropriate for client-side key stretching because the authentication procedure is not deterministic. In subsequent work, Blocki and Sridhar~\cite{blocki2016client} developed Client-CASH an extension of CASH suitable for client-side key stretching.

\noindent{\bf Password Alternatives.} Another line of research has focused on developing alternatives to text passwords like graphical passwords\cite{Jermyn:1999:DAG:1251421.1251422,brostoff2000passfaces,biddle2012graphical}. Herley and van Oorschoot argued that text passwords will remain the dominant means of authentication despite attempts to replace them~ \cite{passwordPersistence}. We note that CASH could be used to protect graphical passwords as well as text passwords.

\section{Conclusions} \label{sec:Conclusions}
We presented a novel Stackelberg game model which captures the essential elements of the interaction between an authentication server (leader) and an offline password cracker (follower). Our Stackelberg model can provide guidance for the authentication server by providing an estimate of how significantly key-stretching reduces the number of passwords that would be cracked by a rational offline adversary in the event of a server breach. We also introduced, CASH, a randomized secure hashing algorithm that significantly outperforms traditional key-stretching defenses in our Stackelberg game. While the problem of computing an exact Stackelberg equlibria is non-convex, we were able to find an efficient heuristic algorithm to compute good strategies for the authentication server. Our heuristic algorithm is based on a highly related problem that we are able to show is tractable. Finally, we analyzed the performance of our CASH mechanism using empirical password data from two large scale password frequency datasets: Yahoo! and RockYou. Our empirical analysis demonstrates that the CASH mechanism can significantly (e.g., $50\%$) reduce the fraction of passwords that would be cracked in an offline attack by a rational adversary. Thus, our CASH mechanism can be used to mitigate the threat of offline attacks without increasing computation costs for a legitimate authentication server.



\section*{Acknowledgments}
This work was completed in part while the first author was visiting the Simons Institute for the Theory of Computing, supported by the Simons Foundation and by the DIMACS/Simons
Collaboration in Cryptography through NSF grant $\#$CNS-1523467. The research was also supported by an AFOSR MURI on Science of Cybersecurity as well as grants from the NSF TRUST Center. The views expressed in this paper are those of the authors and do not necessarily reflect the views of the Simons Institute or the National Science Foundation.

%
%

\bibliographystyle{IEEEtranS}
\bibliography{password,bounded-parallel-mhf}


\appendix

 \section*{Missing Proofs}

\begin{remindertheorem}{\ref{thm:Polytime}} \thmPolyTime
\end{remindertheorem}
\vspace{4mm}

\begin{proofof}{Theorem \ref{thm:Polytime}} (sketch) 
We first note that the convex feasible space from Optimization Goal \ref{goal:MinAdvSuccessAsLP} fits inside a ball of radius one. Thus, the Ellipsoid algorithm~\cite{khachiyan1980polynomial} will converge after making $poly(m)$ many queries to our separation oracle. By Theorem \ref{thm:SeparationOracle} the running time of the separation oracle is $O\left(mn \log mn\right)$. Thus, the total running time is polynomial in $m$ and $n$.
\end{proofof}

\subsection{Separation Oracle} \label{subsec:SeparationOracle}
The key idea behind Theorem \ref{thm:Polytime} is to develop a polynomial time separation oracle. A separation oracle is an algorithm that takes as input a convex set $K \subseteq \mathbb{R}^{m}$ and a point $p \in \mathbb{R}^m$. The separation oracle outputs ``Ok" if $p \in K$; otherwise it returns hyperplane separating $x$ from $K$. In our context, the separation oracle takes as input a proposed solution $\mathcal{P}_{Adv,B}',\tilde{p}_1', \ldots, \tilde{p}_m'$ and outputs ``Ok" if every constraint from Optimization Goal \ref{goal:MinAdvSuccessAsLP} is satisfied; otherwise the separation oracle finds a constraint that is not satisfied. If we can develop a polynomial time separation oracle for our linear program then we can use the ellipsoid algorithm to solve our linear program in polynomial time~\cite{khachiyan1980polynomial}. For our purposes, it is not necessary to understand how the ellipsoid algorithm works. Will we treat the ellipsoid algorithm as a blackbox that can solve a linear program in polynomial time given oracle access to a separation oracle. 

We now present a separation oracle for Goal \ref{goal:MinAdvSuccessAsLP}. Theorem \ref{thm:SeparationOracle} states that Algorithm \ref{alg:SeparationOracle} is a polynomial time separation oracle. We provide intuition for our separation oracle below. Theorem \ref{thm:Polytime} follows immediately because we can run the ellipsoid algorithm~\cite{khachiyan1980polynomial} with our separation oracle to solve Goal \ref{goal:MinAdvSuccessAsLP} in polynomial time. 

\newcommand{\thmSepOracle}{Algorithm \ref{alg:SeparationOracle}  is valid separation oracle for Goal \ref{goal:MinAdvSuccessAsLP} and runs in time $O\left(mn \log mn\right)$.}
\begin{theorem} \label{thm:SeparationOracle} \thmSepOracle \end{theorem}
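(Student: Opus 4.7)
The plan is to verify two things about Algorithm~\ref{alg:SeparationOracle}: correctness (given a candidate $\tilde{p}_1',\ldots,\tilde{p}_m', \mathcal{P}_{Adv,B}'$, it returns ``Ok'' iff all constraints of Goal~\ref{goal:MinAdvSuccessAsLP} hold, and otherwise returns a violated constraint), and the $O(mn\log mn)$ runtime. Constraints (1)--(4) are a constant number of explicit linear inequalities, so I would simply check each of them in $O(m)$ time; if any fails we return it as the separating hyperplane. The entire difficulty lies in the exponential family of type (5) constraints.

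The key observation is that although there are $\binom{mn}{B}$ type (5) constraints, the \emph{tightest} one is determined by a simple greedy rule: among all subsets $S \subseteq \PasswordSpace \times \{1,\ldots,m\}$ of size $B$, the quantity $\sum_{(i,j)\in S} p_i\tilde{p}_j'$ is maximized by taking $S^\star$ to be the $B$ indices $(i,j)$ with the largest products $p_i\cdot\tilde{p}_j'$ (this is immediate because the objective is linear in the indicator vector of $S$, so exchanging any in-$S$ element for a larger out-of-$S$ element never decreases the sum). Therefore the entire family of type (5) constraints is satisfied if and only if the single constraint corresponding to $S^\star$ is satisfied, i.e.\ $\mathcal{P}_{Adv,B}' \geq \sum_{(i,j)\in S^\star} p_i\tilde{p}_j'$.

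This reduces the separation problem to: compute the $B$ largest entries of the $mn$-element list $\{p_i\tilde{p}_j'\}_{i\le n, j\le m}$ and sum them. I would do this by generating the $mn$ products in $O(mn)$ time, sorting them in $O(mn\log mn)$ time, and scanning the top $B\le mn$ entries in $O(mn)$ time to form their sum $\sigma^\star$. If $\mathcal{P}_{Adv,B}' \geq \sigma^\star$, every type (5) constraint holds and we return ``Ok''; otherwise we return the specific violated inequality $\mathcal{P}_{Adv,B} \geq \sum_{(i,j)\in S^\star} p_i\tilde{p}_j$, which by construction is violated by the input and is a valid linear constraint of Goal~\ref{goal:MinAdvSuccessAsLP}, hence a legitimate separating hyperplane.

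Combining the $O(m)$ check for constraints (1)--(4) with the $O(mn\log mn)$ greedy check for constraints (5) gives the claimed total runtime. The only ``obstacle'' is really a conceptual one, namely justifying the greedy reduction rigorously; since the quantity to maximize is a linear function of the $0/1$ incidence vector subject only to a cardinality constraint, the exchange argument above settles it in one line, so no deeper combinatorial work is required. A minor bookkeeping point I would address is the compact equivalence-class representation used in Section~\ref{sec:PracticalAlgorithm}: there the relevant ``top-$B$'' operation picks multiplicity-weighted tuples $(i,j)$ with multiplicity capped at $n_i$, and the same greedy/sort argument applies with cost $O(mn'\log mn')$; but for Theorem~\ref{thm:SeparationOracle} as stated it is enough to work with the raw $mn$ tuples.
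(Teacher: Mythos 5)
Your proposal is correct and follows essentially the same route as the paper's own proof: check constraints (1)--(4) directly, then use the greedy/exchange argument that the top-$B$ products $p_i\tilde{p}_j'$ give the tightest type (5) constraint, so one sort of the $mn$ products ($O(mn\log mn)$) suffices to either certify feasibility or return the violated constraint as the separating hyperplane. No substantive differences from the paper's argument.
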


\begin{algorithm}[H] 
 \caption{Separation Oracle. Output is an unsatisfied constraint $C$ or ``Ok" if every constraint is satisfied.}
\label{alg:SeparationOracle}
\begin{algorithmic}[1]
 \renewcommand{\algorithmicrequire}{\textbf{Input:}}
 \renewcommand{\algorithmicensure}{\textbf{Output:}}
\REQUIRE $p_1,\ldots, p_n$, $\tilde{p}_1',\ldots, \tilde{p}_m'$, $B$, $C_{max}$, $k$, $\alpha$, and and $\mathcal{P}_{Adv,B}'$.

\IF{$\sum_{i=1}^m \tilde{p}_i' \neq 1$ }
\RETURN $\sum_{i=1}^m \tilde{p}_i = 1$.
\ENDIF
\IF{$(1-\alpha)m\cdot k+ k\cdot\alpha \sum_{i=1}^m  i\cdot \tilde{p}_i'
 > C_{max}$}
\RETURN $(1-\alpha)m\cdot k+ k\cdot\alpha \sum_{i=1}^m  i\cdot \tilde{p}_i'
 \leq C_{max}$
\ENDIF
\FOR{i=1,\ldots,m}
\IF{$\tilde{p}_{i}' < 0$} 
\RETURN $\tilde{p}_i \geq 0$.
\ENDIF
\IF{ $i<m$ and $\tilde{p}_{i+1}' > \tilde{p}_{i}'$}
\RETURN $\tilde{p}_{i+1} \leq \tilde{p}_{i}$. 
\ENDIF
\ENDFOR
\IF{$\mathcal{P}_{Adv,B}' > 1$}
\RETURN $\mathcal{P}_{Adv,B}' \leq 1$
\ENDIF
\IF{$\mathcal{P}_{Adv,B}' < 0$} \RETURN $\mathcal{P}_{Adv,B}' \geq 0$
\ENDIF
\FOR{i = 1,\ldots, n}
\FOR{j = 1,\ldots, m}
\STATE $p_{i,j}' \gets p_i\tilde{p}_j'$. 
\ENDFOR
\ENDFOR

\STATE $TUPLES \gets \{ (i,j) ~\vline 1 \leq i \leq n \wedge 1 \leq j \leq m\}$.\\

\STATE {\bf Define} ordering $\succ$ over $TUPLES$:  $\left(i_1,j_1\right) \succ \left(i_2,j_2\right)$ if any of the following conditions hold (1) $p_{i_1,j_1} > p_{i_2,j_2}$, or (2) $p_{i_1,j_1} = p_{i_2,j_2}$ and $i_1 < i_2$ or (3)$p_{i_1,j_1} = p_{i_2,j_2}$ and $i_1 = i_2$ and $j_1 < j_2$. \\

\STATE $SORTED-TUPLES \gets \mathbf{SORT}\left(TUPLES,\succ \right)$. 
\COMMENT{Let  $T_k \doteq SORTED-TUPLES[k]$. } \\
\COMMENT{$T_k$ is the $k$\rq{}th biggest element according to  $\succ$}
\STATE $S \gets \{T_1,\ldots, T_B\}$.
\FOR{i=1,\ldots,n}
 \STATE $b_i' \gets \max\left\{j\in \mathbb{Z}~\vline j=0 \vee (i,j) \in S \right\}$ 
\ENDFOR
\IF{$\mathcal{P}_{Adv,B}' < \sum_{i=1}^n p_i \sum_{j=1}^{b_i'} \tilde{p}_j'$}
 
\RETURN $\mathcal{P}_{Adv,B} \geq \sum_{i=1}^n p_i \sum_{j=1}^{b_i'} \tilde{p}_j$
\ELSE
\RETURN ``Ok"
\ENDIF
\end{algorithmic}

\end{algorithm}

Intuitively, the idea behind the separation oracle is quite simple. Suppose that we want to verify that the variable assignment $\tilde{p}_1', \ldots, \tilde{p}_m',\mathcal{P}_{Adv,B}'$ is feasible. The first few steps of our separation oracle verify that constraints (1)--(4) from Goal \ref{goal:MinAdvSuccessAsLP} are satisfied by the assignment $\tilde{p}_1', \ldots, \tilde{p}_m'$. These straightforward checks simply verify that the proposed CASH distribution $\tilde{p}_1', \ldots, \tilde{p}_m'$ is valid and that the server's amortized costs are less than $C_{SRV,\alpha}$. 

The next step,  verifying that all type (5) constraints are satisfied, is a bit more challenging because there are exponentially many constraints. Recall that these constraints intuitively ensure that $\mathcal{P}_{Adv,B}'$ is indeed an upper bound on the success rate of the optimal adversary given CASH distribution $\tilde{p}_1', \ldots, \tilde{p}_m'$. While we don't have time to check every feasible budget allocation $\vec{b} \in \mathcal{F}_B$ for the adversary, it suffices to find the adversary's optimal budget allocation $\vec{b}'$  and verify that $\mathcal{P}_{Adv,B}'$ is an upper bound on the adversary's success rate given allocation $\vec{b}'$. 

The adversary gets $\lfloor B/k \rfloor$ total guesses of the form $\left(pw_i,j\right)$ for each user $u$. The probability that the guess $\left(pw_i,j\right)$ is correct is simply $p_{i,j}' \doteq p_i\cdot\tilde{p}_j'$  --- the guess is correct if and only if $u$ selected password $pwd_u = pwd_i$ and we selected CASH running time parameter $t_u = j$. The adversary's optimal strategy is simple: try the $\lfloor B/k \rfloor$ most likely guesses. Thus, we can quickly find the adversary's optimal budget allocation $\vec{b}'$ by computing $p_{i,j}$ for each pair $\left(pw_i,j\right)$ and sorting these values. This takes time $O\left( nm \log nm\right)$. 

\begin{remindertheorem}{\ref{thm:SeparationOracle}} \thmSepOracle
\end{remindertheorem}
\vspace{4mm}

\begin{proofof}{Theorem \ref{thm:SeparationOracle}} (Sketch)
The most expensive step in our algorithm is sorting the $p_{i,j}'$ values. There are $mn$ such values so the algorithm takes $O\left(mn \log mn\right)$ steps. We now argue that our separation oracle has correct behavior. 

Suppose first that there is a constraint $C$ from Optimization Goal \ref{goal:MinAdvSuccessAsLP} that is not satisfied by $\tilde{p}_1',\ldots,\tilde{p}_m',\mathcal{P}_{ADV,B}'$. It is easy to verify that our separation oracle will catch violations of constraints (1)--(4) so we can assume that $C$ be a violated type (5) constraint $\mathcal{P}_{ADV,B}' < \sum_{i=1}^n p_i \sum_{j=1}^{b_i^C} \tilde{p}_j$ where $\left(b_1^C,\ldots,b_n^C \right) \in \mathcal{F}_B$. Let $b_1',\ldots, b_n'$ denote the budget obtained by sorting the $p_{i,j}$ values and then greedily selecting a set $S'$ of the largest values until the budget expires --- we define $b_i'$ to be the number of values of the form $p_{i,j}$ that are selected and $S' = \{ \left(i,j \right)~\vline~i\leq n \wedge j \leq b_i'\}$ It suffices to argue that \[\sum_{i=1}^n p_i \sum_{j=1}^{b_i'} \tilde{p}_j \geq \sum_{i=1}^n p_i \sum_{j=1}^{b_i^C} \tilde{p}_j\] because in this case our algorithm will return the violated constraint
\[\mathcal{P}_{ADV,B}' \geq \sum_{i=1}^n p_i \sum_{j=1}^{b_i'} \tilde{p}_j \ . \]
Let $S^C = \{ (i,j)~\vline~ i \leq n \wedge j \leq b_i^C \}$. We first observe that 
\[ \sum_{(i,j) \in S'} p_{i,j}' \geq \sum_{(i,j) \in S^C} p_{i,j}'   \ , \]
by construction of $S'$. Thus, 
\begin{eqnarray*}
\sum_{i=1}^n p_i \sum_{j=1}^{b_i'} \tilde{p}_j' &=& \sum_{(i,j) \in S'} p_{i,j}' \\
&\geq& \sum_{(i,j) \in S^C} \\ 
&=& \sum_{i=1}^n p_i \sum_{j=1}^{b_i^C} \tilde{p}_j'
\end{eqnarray*}

Finally, when the solution $\tilde{p}_1',\ldots,\tilde{p}_m',\mathcal{P}_{ADV,B}'$ does satisfy all constraints from Optimization Goal \ref{goal:MinAdvSuccessAsLP} our algorithm will not find a constraint $b_1',\ldots,b_n'$ such that  
\[\mathcal{P}_{ADV,B}' < \sum_{i=1}^n p_i \sum_{j=1}^{b_i'} \tilde{p}_j  \ . \]
In this case our algorithm will return ``Ok" --- the desired outcome. 
\end{proofof}

\begin{algorithm}[H]
 \caption{$\mathbf{InitialConstraints}\left(C_{max},\alpha,k\right)$}
 \label{alg:InitialConstraints}
\begin{algorithmic}[1]
 \renewcommand{\algorithmicrequire}{\textbf{Input:}}
 \renewcommand{\algorithmicensure}{\textbf{Output:}}
 \REQUIRE  $C_{max},\alpha,k$
 \STATE $C \gets C \bigcup \{\sum_{i=1}^m \tilde{p}_i = 1 \}$.
\STATE $C \gets C \bigcup \{1\geq \tilde{p}_m \geq 0 \}$.
\STATE $C \gets C \bigcup \{(1-\alpha) m\cdot k + \alpha\cdot k \sum_{i=1}^m i\cdot\tilde{p}_i \leq C_{SRV,\alpha} \}$. 
\FOR{i=1,\ldots,m-1}
\STATE $C \gets C \bigcup \{1 \geq \tilde{p}_i \geq 0\}$.
\STATE $C \gets C \bigcup \{\tilde{p}_i \geq \tilde{p}_{i+1}\}$.
\ENDFOR
\STATE $C \gets C \bigcup \{1 \geq \mathcal{P}_{Adv,B} \geq 0 \}$.
\end{algorithmic}
\end{algorithm}

\begin{algorithm}[H]
 \caption{$\mathbf{RationalAdvSuccess}\left(p,n,\hat{v}, \tilde{p}, k \right) $}
 \label{alg:AdvSuccess}
\begin{algorithmic}[1]
 \renewcommand{\algorithmicrequire}{\textbf{Input:}}
 \renewcommand{\algorithmicensure}{\textbf{Output:}}
\REQUIRE $p_1,\ldots, p_{n'}$, $n_1,\ldots,n_{n'}$,  $\hat{v}$, $\tilde{p}$
\STATE $curSuccess \gets 0$
\STATE $curThreshold \gets 0$
\STATE $curUtility \gets 0$
\STATE $bestUtilityFound \gets 0$
\STATE $bestUtilitySuccess \gets 0$
\STATE $T \gets \emptyset$
\FOR{$i=1,\ldots, n'$}
\FOR{$j=1,\ldots, m$}
\STATE $T.Add\left(p_i \cdot \tilde{p}_j, n_i \right)$
\ENDFOR
\ENDFOR
\STATE $\mathbf{Sort}\left(T\right)$.   ~~~~~~\COMMENT{Use first component $p_i\cdot \tilde{p}_j$ for} \STATE~~~~~~~~~~~~~~~~~~\COMMENT{ comparison (greatest to least)}
\FOR{$t \in T$}
\STATE $(\pi, count) \gets t$
\STATE $curThreshold \gets curThreshold + count$
\STATE $curSuccess \gets \pi \cdot count$
\STATE $\Delta benefit \gets \hat{v} \cdot  \pi \cdot count$
\STATE $\Delta cost \gets  k*\left(count*(1-curSuccess)+ \frac{\pi\cdot count^2 + \pi \cdot count}{2}\right)$
\STATE $curUtility \gets curUtility + \Delta benefit - \Delta cost$

\IF{$curUtility  > bestUtilityFound$}
\STATE $bestUtilityFound \gets curUtility $
\STATE $bestUtilitySuccess \gets curSuccess$
\ENDIF
\ENDFOR
\RETURN $bestUtilitySuccess$
\end{algorithmic}

\end{algorithm}

While we do not have a polynomial time algorithm to compute the Stackelberg equilibrium of our game, it is always easy for the adversary to compute his best response. 

\begin{theorem}
Let $p = p_1 \geq \ldots \geq p_{n'}$ and $n_1,\ldots, n_{n'}$ define a probability distribution over passwords in which there are $n_i$ passwords that each are chosen with probability $p_i$ and let $\tilde{p}_1 \geq \ldots \geq \tilde{p}_m$ denote any CASH distribution. Then for any value $\hat{v}$ and any hash cost parameter $k$ we can computed the adversary's optimal strategy in time $O(mn' \log mn')$.
\end{theorem}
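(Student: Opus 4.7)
The plan is to argue that the adversary's optimal strategy always corresponds to guessing the top $B$ tuples (for some carefully chosen $B$), and moreover that the optimal $B$ must fall at the boundary of a ``probability class''. There are only $mn'$ such boundaries, so after the $O(mn' \log mn')$ sort of the class probabilities $p_i \cdot \tilde{p}_j$, a single linear-time sweep that maintains the adversary's utility incrementally will find the optimum.

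First I would justify the threshold structure. Recall from equation (\ref{eq:UAdvCASH}) that for any fixed number of guesses $B$, the adversary minimizes expected cost by trying the $B$ most likely $(pwd,t)$ pairs first, since swapping a higher-probability pair earlier in the schedule can only increase the reward term $v\sum \pi_i$ (unchanged) while reducing the expected-guess term $\sum i\cdot \pi_i + B\sum_{i>B}\pi_i$. Thus it suffices to optimize over $B \in \{0,1,\ldots,mn\}$.

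The main obstacle is reducing $mn$ candidate thresholds to $mn'$. The key observation is the marginal utility identity
\[ \UAdvCASH{B+1}{v} - \UAdvCASH{B}{v} = v \pi_{B+1} - k\!\left( 1 - \textstyle\sum_{i=1}^{B} \pi_i \right), \]
obtained by differencing the formula for $\UAdvCASH{\cdot}{v}$. For any block of consecutive indices where the probabilities $\pi_{B+1},\ldots,\pi_{B+n_i}$ are all equal to some $\pi = p_i \tilde{p}_j$ (i.e.\ the $n_i$ tuples of a single type $(i,j)$), the first term is constant while $1 - \sum_{\ell \leq B} \pi_\ell$ strictly decreases as $B$ advances through the block. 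Hence the marginal utility is strictly increasing within a block, so $\UAdvCASH{B}{v}$ is a convex function of $B$ on each block, and its maximum over the block is attained at one of the two endpoints. Consequently the global optimum $B^*$ lies at a class boundary, i.e.\ $B^*$ has the form ``use all tuples from the top $\ell$ types'' for some $\ell \in \{0,1,\ldots,mn'\}$.

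Finally I would describe the computation. After sorting the $mn'$ pairs $(i,j)$ by $p_i \tilde{p}_j$ in $O(mn' \log mn')$ time, the algorithm walks through the sorted list maintaining $B$, $S_B = \sum_{i \le B}\pi_i$, and $\UAdvCASH{B}{v}$. When processing a type $(i,j)$ with $\pi = p_i \tilde{p}_j$ and multiplicity $n_i$, the contribution to the reward is $v \cdot n_i \pi$ and the contribution to the cost is $k\sum_{\ell=0}^{n_i-1}(1 - S_B - \ell \pi) = k\bigl[n_i(1-S_B) - \pi \binom{n_i}{2}\bigr]$, computable in $O(1)$; we then update $S_B$ and $B$. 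Tracking the best utility across the $O(mn')$ boundaries and returning the corresponding success rate $S_{B^*}$ yields the adversary's optimal response in total time $O(mn' \log mn')$, dominated by the sort.
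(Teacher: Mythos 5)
Your proposal is correct, and the underlying algorithm is the same as the paper's (Algorithm \ref{alg:AdvSuccess}): sort the $mn'$ group probabilities $p_i\tilde{p}_j$ in $O(mn'\log mn')$ time, then sweep through the groups, maintaining the adversary's utility incrementally and returning the success rate at the best threshold. Where you differ is in the justification. The paper obtains the per-group cost increment $k\left(count\cdot(1-curSuccess)+\frac{\pi\cdot count^2+\pi\cdot count}{2}\right)$ by a probabilistic case analysis on where the correct tuple $(pwd_u,t_u)$ lies (already guessed, inside the newly added group, or not guessed at all), and it iterates only over the $mn'$ group boundaries without explicitly arguing that no interior threshold can do better. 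You instead difference the closed-form utility to get the marginal-utility identity $v\pi_{B+1}-k\left(1-\sum_{i\le B}\pi_i\right)$ (valid since the $\pi_i$ sum to one), which buys two things: the same per-group update in closed form --- your $k\left[n_i(1-S_B)-\pi\binom{n_i}{2}\right]$ is algebraically identical to the paper's expression once its $curSuccess$ is read as the updated cumulative success probability --- and a discrete-convexity argument showing the marginal utility is nondecreasing across a run of equal-probability tuples, so the optimum must lie at a group boundary, a point the paper's proof sketch takes for granted. So your route makes the boundary restriction explicit, while the paper's case analysis gives a more directly probabilistic reading of the same formula; the only nit is that the monotonicity within a block is strict only when $\pi>0$, though weak monotonicity already suffices for your endpoint argument.
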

\begin{proof} (sketch)
Algorithm \ref{alg:AdvSuccess} computes the adversay's optimal strategy. The most expensive step is the sorting the $mn$ tuples, which takes time $O(mn' \log mn')$. Thus, Algorithm \ref{alg:AdvSuccess} runs in time in time $O(mn' \log mn')$. Algorithm \ref{alg:AdvSuccess} iterates through the different possible thresholds that a rational adversary might select. The variable $curUtility$ keeps track of the utility at each threshold allowing us to remember which threshold was optimal. Intuitively, Algorithm \ref{alg:AdvSuccess} will find the best strategy if and only if $curUtility$ is always a correct estimate of the adversary's utility. Clearly, this is true initially (the utility of selecting $B^*=0$ is $0$). Thus, by induction, it suffices to show that the formulas used to compute marginal cost and marginal benefit are correct. If the adversary adds all of the tuples $(pwd,t)$ corresponding to $(\pi,count)$ to the set of tuples to guess then the adversary is increasing the odds that he cracks the password by $\pi\cdot count$ because he is adding $count$ tuples to his set of guesses and each tuple is correct with probability $\pi$. Thus, his marginal benefit is $\hat{v}\cdot\pi \cdot count$. To analyze marginal cost we consider three cases:
1) The correct tuple $(pwd*,t*)$ was already in the adversary's set of tuples to guess. In this case we don't increase the adversary's guessing costs because he will always quit before he guesses one of the new tuples we added. 2) The correct tuple $(pwd*,t*)$ is not already in the adversary's set of tuples and it is not in the new set of tuples we add. Thus, we increase the adversary's guessing costs  by $k*count$. 3) The correct tuple $(pwd*,t*)$ is  in the new set of tuples we add. In this case we increase the adversary's guessing costs by $k*\left(\frac{count+1}{2} \right)$ in expectation. The probability that we are in case 2 is $(1-curSuccess)$ and the probability that we are in case 3 is $\pi\cdot count$. Thus, 
  \[\Delta cost \gets  k*\left(count*(1-curSuccess)+ \frac{\pi\cdot count^2 + \pi \cdot count}{2}\right) \ .\]
\end{proof}

\section*{Extra Plots}
Figures \ref{fig:YahooResultsOptForRockYou95} and \ref{fig:RockYouResultsOptForYahoo95} explore what happens when the defender uses the wrong empirical password distribution when searching for a good CASH distribution $\tilde{p}$ (e.g., if the defender optimizes $\tilde{p}$ under the assumption that the empirical password distribution is given by the Yahoo! dataset when the actual distribution is given by the RockYou dataset).  Once again non-uniform CASH and CASH both significantly outperform deterministic key-stretching, an non-uniform CASH outperforms uniform CASH (slightly) over most of the curve\footnote{The exception is Figure \ref{fig:YahooResultsOptForRockYou95} contains a region where uniform-CASH actually outperforms non-uniform CASH (yielding $15\%$ reduction in cracked passwords in comparison to non-uniform CASH).}. Interestingly, in one part of the curve in Figure \ref{fig:YahooResultsOptForRockYou95} the adversary's success rate actually drops as $v$ increases. This would be impossible if the defender was using the correct empirical password distribution. In this case the adversary's success rate drops when $v$ increases because the defender switches to a better CASH distribution $\tilde{p}$ that happens to perform better under the real distribution. 

Figure \ref{fig:YahooRobust3Results95} plots the fraction of cracked passwords against a value $v$ adversary when the defender selects $\tilde{p}$ and $k$ under the assumption that the adversary's value is $\hat{v}=2.9\times C_{max} \times 10^7$ (using the empirical password distribution from the Yahoo dataset and setting $\alpha=0.95$).
Figure \ref{fig:YahooDistribution95} plots the corresponding cumulative cost distribution for the authentication server induced by $\tilde{p}$, $k$ and $\alpha$. For comparison, we also include the cumulative cost distributions for uniform CASH and deterministic key-stretching under the same maximum cost parameter $C_{max}$. 

\begin{figure}[!t]
\centering
    \begin{tikzpicture}[scale=1.3]  
   \begin{semilogxaxis}[
    title style={align=center},
    title={ {\small $\alpha=0.95$ }},
    xlabel={Adversary Value: $\frac{v}{C_{max}}$},
    xlabel={Adversary Value: $\frac{v}{C_{max}}$},
    ylabel={$\%$ Cracked Passwords},
    ylabel shift = -3pt,
    grid=major,
    small,
    cycle list = {{red, mark=none}, {blue, mark=none}, {green, mark=none},{green, dashed, mark=none}, {blue, dashed, mark=none},  {blue, dashed, mark=none}, {red, dashed, mark=none},{brown, dashed, mark=none} },
    legend style = {font=\tiny, at={(.05,.95)}, anchor=north west},
    legend entries = { $\PAdvCASH{v}{v}{C_{max}}$, $\PAdvUnif{v}{C_{max}}$, $\PAdvDet{v}{C_{max}}$, Improvement: \textcolor{blue}{blue}$-$\textcolor{red}{red}, Improvement: \textcolor{green}{green}$-$\textcolor{red}{red}}
   ] 

    \addlegendimage{no markers, red, mark=square*}
    \addlegendimage{no markers, blue, mark=square*}
    \addlegendimage{no markers, green, mark=square*}
    \addlegendimage{no markers, dashed, blue}
    \addlegendimage{no markers, dashed, green}
\addplot coordinates {  (100,0)  (500,0.00799779034835097)  (1000,0.0125851579477658)  (5000,0.0275522434975735)  (10000,0.0465878576790846)  (50000,0.117449398351743)  (100000,0.155467102277563)  (500000,0.256128351889682)  (1000000,0.310620114773622)  (5000000,0.452265658830308)  (10000000,0.535874002906533)  (15000000,1)  (20000000,0.999999999999985)  (25000000,0.999999999999995)  (26500000,1.00000000000001)  (27000000,1)  (27500000,0.999999999999993)  (28000000,1.00000000000002)  (29000000,0.999999999999994)  (30000000,0.999999999999994)  (70000000,1.00000000000001)  (100000000,1.00000000000001)  };
 \addplot coordinates {  (100,0)  (500,0.00891714075850031)  (1000,0.0136977788934083)  (5000,0.0265629142590948)  (10000,0.0465878576790853)  (50000,0.124664436714369)  (100000,0.164009427486493)  (500000,0.264563609156202)  (1000000,0.322603129466177)  (5000000,0.465731506185799)  (10000000,0.55428368364662)  (15000000,1)  (20000000,1)  (25000000,1)  (26500000,1)  (27000000,1)  (27500000,1)  (28000000,1)  (29000000,1)  (30000000,1)  (70000000,1)  (100000000,1)  };
 \addplot coordinates {  (100,0)  (500,0.0136977788934083)  (1000,0.0180747779954648)  (5000,0.0441965417827129)  (10000,0.0730608733055595)  (50000,0.160089926850547)  (100000,0.199287847017616)  (500000,0.317302545367371)  (1000000,0.376634078642379)  (5000000,0.518411276766697)  (10000000,1)  (15000000,1)  (20000000,1)  (25000000,1)  (26500000,1)  (27000000,1)  (27500000,1)  (28000000,1)  (29000000,1)  (30000000,1)  (70000000,1)  (100000000,1)  };
\addplot coordinates {  (100,0)  (500,0.00569998854505728)  (1000,0.00548962004769893)  (5000,0.0166442982851393)  (10000,0.0264730156264749)  (50000,0.0426405284988036)  (100000,0.0438207447400539)  (500000,0.0611741934776891)  (1000000,0.0660139638687569)  (5000000,0.0661456179363884)  (10000000,0.464125997093467)  (15000000,-1.99840144432528E-15)  (20000000,1.45439216225896E-14)  (25000000,4.66293670342566E-15)  (26500000,-6.66133814775094E-15)  (27000000,-1.99840144432528E-15)  (27500000,7.105427357601E-15)  (28000000,-1.55431223447522E-14)  (29000000,5.6621374255883E-15)  (30000000,5.6621374255883E-15)  (70000000,-6.66133814775094E-15)  (100000000,-6.66133814775094E-15)  };
\addplot coordinates {  (100,0)  (500,0.000919350410149339)  (1000,0.00111262094564242)  (5000,-0.000989329238478758)  (10000,7.70217223333702E-16)  (50000,0.00721503836262524)  (100000,0.00854232520893045)  (500000,0.00843525726651972)  (1000000,0.0119830146925548)  (5000000,0.0134658473554905)  (10000000,0.0184096807400869)  (15000000,-1.99840144432528E-15)  (20000000,1.45439216225896E-14)  (25000000,4.66293670342566E-15)  (26500000,-6.66133814775094E-15)  (27000000,-1.99840144432528E-15)  (27500000,7.105427357601E-15)  (28000000,-1.55431223447522E-14)  (29000000,5.6621374255883E-15)  (30000000,5.6621374255883E-15)  (70000000,-6.66133814775094E-15)  (100000000,-6.66133814775094E-15)  };

   \end{semilogxaxis} 

  \end{tikzpicture}

\caption{RockYou Results (Optimized for Yahoo): $\alpha = 0.95$.}
\label{fig:RockYouResultsOptForYahoo95}
\end{figure}
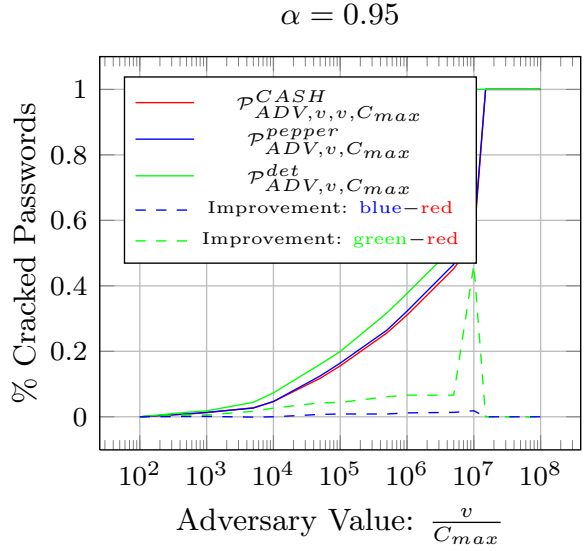

\begin{figure}[!t]
\centering
    \begin{tikzpicture}[scale=1.3]  
   \begin{semilogxaxis}[
    title style={align=center},
    title={ {\small $\alpha=0.95$ }},
     xlabel={Adversary Value: $\frac{v}{C_{max}}$},
    ylabel={$\%$ Cracked Passwords},
    ylabel shift = -3pt,
    grid=major,
    small,
    cycle list = {{red, mark=none}, {blue, mark=none}, {green, mark=none},{green, dashed, mark=none}, {blue, dashed, mark=none},  {blue, dashed, mark=none}, {red, dashed, mark=none},{brown, dashed, mark=none} },
    legend style = {font=\tiny, at={(.05,.95)}, anchor=north west},
    legend entries = { $\PAdvCASH{v}{v}{C_{max}}$, $\PAdvUnif{v}{C_{max}}$, $\PAdvDet{v}{C_{max}}$, Improvement: \textcolor{blue}{blue}$-$\textcolor{red}{red}, Improvement: \textcolor{green}{green}$-$\textcolor{red}{red}}
   ] 

    \addlegendimage{no markers, red, mark=square*}
    \addlegendimage{no markers, blue, mark=square*}
    \addlegendimage{no markers, green, mark=square*}
    \addlegendimage{no markers, dashed, blue}
    \addlegendimage{no markers, dashed, green}
\addplot coordinates {  (100,0)  (500,0.00857947813136685)  (1000,0.0130192581998815)  (5000,0.0196877875530742)  (10000,0.0290163537920214)  (50000,0.0724627397359508)  (100000,0.0997684831358242)  (500000,0.175899274729135)  (1000000,0.223577046597311)  (5000000,0.351703707749315)  (10000000,0.415825677680902)  (15000000,0.615363602360774)  (20000000,0.466773219038238)  (25000000,0.501655087428729)  (26500000,0.504104132587705)  (27000000,0.5057774892863)  (27500000,0.5057774892863)  (28000000,0.99999999999999)  (29000000,0.99999999999999)  (30000000,0.99999999999999)  (70000000,0.99999999999999)  (100000000,0.99999999999999)  };
 \addplot coordinates {  (100,0)  (500,0.0108687224894377)  (1000,0.0130192581998815)  (5000,0.0196877875530742)  (10000,0.0277625668318636)  (50000,0.0756145584896869)  (100000,0.104286400708258)  (500000,0.181055900840701)  (1000000,0.230322497241287)  (5000000,0.364249437207828)  (10000000,0.420317345392629)  (15000000,0.45834913690049)  (20000000,0.478853387778074)  (25000000,0.50975028086399)  (26500000,0.50975028086399)  (27000000,0.50975028086399)  (27500000,0.50975028086399)  (28000000,1)  (29000000,1)  (30000000,1)  (70000000,1)  (100000000,1)  };
 \addplot coordinates {  (100,0.0108687224894377)  (500,0.0130192581998815)  (1000,0.0130192581998815)  (5000,0.0273838584095427)  (10000,0.0441510096695537)  (50000,0.101412040578669)  (100000,0.132372669808665)  (500000,0.224601121331902)  (1000000,0.282624504055384)  (5000000,0.411368138539665)  (10000000,0.478853387778074)  (15000000,1)  (20000000,1)  (25000000,1)  (26500000,1)  (27000000,1)  (27500000,1)  (28000000,1)  (29000000,1)  (30000000,1)  (70000000,1)  (100000000,1)  };
\addplot coordinates {  (100,0.0108687224894377)  (500,0.00443978006851463)  (1000,1.38777878078145E-17)  (5000,0.00769607085646846)  (10000,0.0151346558775323)  (50000,0.0289493008427179)  (100000,0.0326041866728405)  (500000,0.0487018466027666)  (1000000,0.0590474574580722)  (5000000,0.0596644307903499)  (10000000,0.0630277100971723)  (15000000,0.384636397639226)  (20000000,0.533226780961762)  (25000000,0.498344912571271)  (26500000,0.495895867412295)  (27000000,0.4942225107137)  (27500000,0.4942225107137)  (28000000,9.65894031423886E-15)  (29000000,9.65894031423886E-15)  (30000000,9.65894031423886E-15)  (70000000,9.65894031423886E-15)  (100000000,9.65894031423886E-15)  };
\addplot coordinates {  (100,0)  (500,0.00228924435807083)  (1000,1.38777878078145E-17)  (5000,-3.46944695195361E-18)  (10000,-0.00125378696015783)  (50000,0.00315181875373605)  (100000,0.00451791757243336)  (500000,0.00515662611156589)  (1000000,0.00674545064397564)  (5000000,0.0125457294585128)  (10000000,0.00449166771172749)  (15000000,-0.157014465460284)  (20000000,0.012080168739836)  (25000000,0.00809519343526022)  (26500000,0.00564614827628496)  (27000000,0.00397279157768993)  (27500000,0.00397279157768993)  (28000000,9.65894031423886E-15)  (29000000,9.65894031423886E-15)  (30000000,9.65894031423886E-15)  (70000000,9.65894031423886E-15)  (100000000,9.65894031423886E-15)  };

   \end{semilogxaxis} 
  \end{tikzpicture}

\caption{Yahoo Results (Optimized for RockYou): $\alpha = 0.95$.}
\label{fig:YahooResultsOptForRockYou95}
\end{figure}
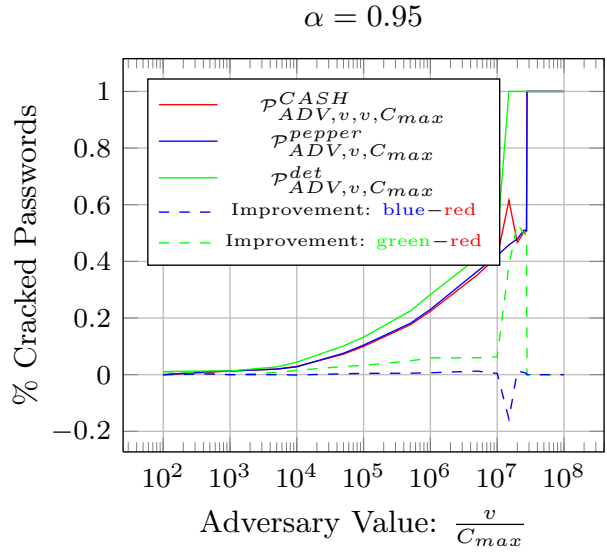

 \begin{figure}[!t]
\centering
    \begin{tikzpicture}[scale=1.3]  
   \begin{semilogxaxis}[
    title style={align=center},
    title={ {\small $\alpha=0.95$, $\hat{v} = 2.9\times C_{max} \times 10^7$ }},
       xlabel={Adversary Value: $\frac{v}{C_{max}}$},
    ylabel shift = -3pt,
    grid=major,
    small,
    cycle list = {{red, mark=none}, {blue, mark=none}, {green, mark=none},{blue, dashed, mark=none}, {green, dashed, mark=none},  {purple, dotted, mark=none}, {red, dashed, mark=none},{brown, dashed, mark=none} },
    legend style = {font=\tiny, at={(.05,.95)}, anchor=north west},
    legend entries = { $\PAdvCASH{v}{\hat{v}}{C_{max}}$, $\PAdvUnif{v}{C_{max}}$, $\PAdvDet{v}{C_{max}}$, Improvement: \textcolor{blue}{blue}$-$\textcolor{red}{red}, Improvement: \textcolor{green}{green}$-$\textcolor{red}{red}, $v = \hat{v}$  }
   ] 

    \addlegendimage{no markers, red, mark=square*}
    \addlegendimage{no markers, blue, mark=square*}
    \addlegendimage{no markers, green, mark=square*}
    \addlegendimage{no markers, dashed, blue}
    \addlegendimage{no markers, dashed, green}
    \addlegendimage{no markers, dotted, purple}

\addplot coordinates {  (100,0.00433975273329383)  (500,0.0142795946346605)  (1000,0.0173591504206237)  (5000,0.0378169183085169)  (10000,0.0494400754971486)  (50000,0.0964486971826689)  (100000,0.128757535130007)  (500000,0.22001064231512)  (1000000,0.261455561816934)  (5000000,0.403458065077915)  (10000000,0.597510122111095)  (15000000,0.620465369472821)  (20000000,0.638899424600312)  (25000000,0.652568925185368)  (26500000,0.652568925185368)  (27000000,0.652568925185368)  (27500000,0.652568925185368)  (28000000,0.652568925185368)  (29000000,0.652568925185368)  (30000000,0.673166853909312)  (70000000,0.999999999999985)  (100000000,0.999999999999985)  };
\addplot coordinates {  (100,0)  (500,0.0108687224894377)  (1000,0.0130192581998815)  (5000,0.0196877875530742)  (10000,0.0277625668318636)  (50000,0.0756145584896869)  (100000,0.104286400708258)  (500000,0.181055900840701)  (1000000,0.230322497241287)  (5000000,0.364249437207828)  (10000000,0.420317345392629)  (15000000,0.45834913690049)  (20000000,0.478853387778074)  (25000000,0.50975028086399)  (26500000,0.50975028086399)  (27000000,0.50975028086399)  (27500000,0.50975028086399)  (28000000,1)  (29000000,1)  (30000000,1)  (70000000,1)  (100000000,1)  };
 \addplot coordinates {  (100,0.0108687224894377)  (500,0.0130192581998815)  (1000,0.0130192581998815)  (5000,0.0273838584095427)  (10000,0.0441510096695537)  (50000,0.101412040578669)  (100000,0.132372669808665)  (500000,0.224601121331902)  (1000000,0.282624504055384)  (5000000,0.411368138539665)  (10000000,0.478853387778074)  (15000000,1)  (20000000,1)  (25000000,1)  (26500000,1)  (27000000,1)  (27500000,1)  (28000000,1)  (29000000,1)  (30000000,1)  (70000000,1)  (100000000,1)  };

\addplot coordinates {  (100,-0.00433975273329383)  (500,-0.0034108721452228)  (1000,-0.00433989222074217)  (5000,-0.0181291307554427)  (10000,-0.021677508665285)  (50000,-0.020834138692982)  (100000,-0.0244711344217496)  (500000,-0.0389547414744192)  (1000000,-0.0311330645756472)  (5000000,-0.0392086278700874)  (10000000,-0.177192776718466)  (15000000,-0.162116232572331)  (20000000,-0.160046036822238)  (25000000,-0.142818644321378)  (26500000,-0.142818644321378)  (27000000,-0.142818644321378)  (27500000,-0.142818644321378)  (28000000,0.347431074814632)  (29000000,0.347431074814632)  (30000000,0.326833146090688)  (70000000,1.46549439250521E-14)  (100000000,1.46549439250521E-14)  };
\addplot coordinates {  (100,0.00652896975614386)  (500,-0.001260336434779)  (1000,-0.00433989222074217)  (5000,-0.0104330598989743)  (10000,-0.00528906582759489)  (50000,0.00496334339599985)  (100000,0.00361513467865751)  (500000,0.00459047901678147)  (1000000,0.0211689422384493)  (5000000,0.00791007346174977)  (10000000,-0.118656734333021)  (15000000,0.379534630527179)  (20000000,0.361100575399688)  (25000000,0.347431074814632)  (26500000,0.347431074814632)  (27000000,0.347431074814632)  (27500000,0.347431074814632)  (28000000,0.347431074814632)  (29000000,0.347431074814632)  (30000000,0.326833146090688)  (70000000,1.46549439250521E-14)  (100000000,1.46549439250521E-14)  };

\addplot coordinates { (29000000, 0)  (29000000, 0.5)  (29000000, 1) };
   \end{semilogxaxis} 
  \end{tikzpicture}
  
  \caption{Yahoo: $\hat{v}\neq v$.}
\label{fig:YahooRobust3Results95}
\end{figure}
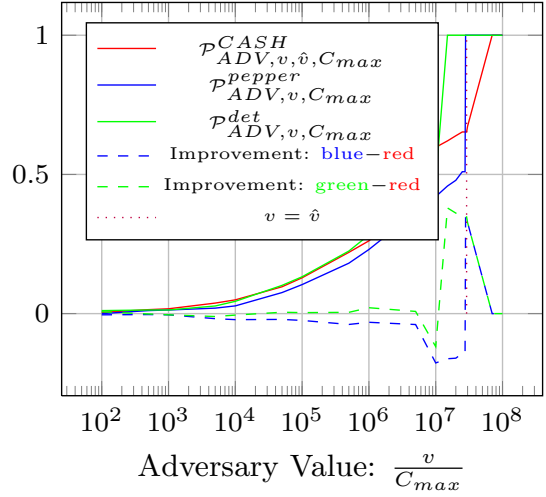

 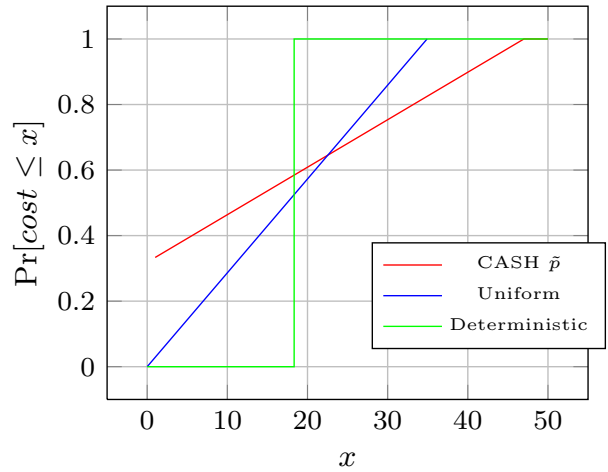
\begin{figure}[!t]
\centering
    \begin{tikzpicture}[scale=1.3]  
   \begin{axis}[
    title style={align=center},
    title={ {\small $\alpha=0.95$, $\hat{v} = 2.9\times C_{max} \times 10^7$ }},
       xlabel={$x$},
    ylabel={$\Pr[cost \leq x]$},
    ylabel shift = -3pt,
    grid=major,
    small,
    cycle list = {{red, mark=none}, {blue, mark=none}, {green, mark=none},{blue, dashed, mark=none}, {green, dashed, mark=none},  {purple, dotted, mark=none}, {red, dashed, mark=none},{brown, dashed, mark=none} },
    legend style = {font=\tiny, at={(.55,.40)}, anchor=north west},
    legend entries = { CASH $\tilde{p}$, Uniform, Deterministic  }
   ] 

    \addlegendimage{no markers, red, mark=square*}
    \addlegendimage{no markers, blue, mark=square*}
    \addlegendimage{no markers, green, mark=square*}
    \addlegendimage{no markers, dashed, blue}
    \addlegendimage{no markers, dashed, green}
    \addlegendimage{no markers, dotted, purple}
\addplot coordinates { (1,0.333333333333333) (2,0.347826086956522) (3,0.36231884057971) (4,0.376811594202899) (5,0.391304347826087) (6,0.405797101449275) (7,0.420289855072464) (8,0.434782608695652) (9,0.449275362318841) (10,0.463768115942029) (11,0.478260869565217) (12,0.492753623188406) (13,0.507246376811594) (14,0.521739130434783) (15,0.536231884057971) (16,0.550724637681159) (17,0.565217391304348) (18,0.579710144927536) (19,0.594202898550724) (20,0.608695652173913) (21,0.623188405797101) (22,0.637681159420289) (23,0.652173913043478) (24,0.666666666666666) (25,0.681159420289855) (26,0.695652173913043) (27,0.710144927536231) (28,0.72463768115942) (29,0.739130434782608) (30,0.753623188405796) (31,0.768115942028985) (32,0.782608695652173) (33,0.797101449275361) (34,0.81159420289855) (35,0.826086956521738) (36,0.840579710144927) (37,0.855072463768115) (38,0.869565217391303) (39,0.884057971014492) (40,0.89855072463768) (41,0.913043478260868) (42,0.927536231884057) (43,0.942028985507245) (44,0.956521739130433) (45,0.971014492753622) (46,0.98550724637681) (47,0.999999999999998) (48,0.999999999999998) (49,0.999999999999998) (50,0.999999999999998) };
\addplot coordinates { (0,0)  (34.9174757204507,1) };
\addplot coordinates { (0,0) (18.3333333333333,0) (18.3333333333333,1) (50,1) };

   \end{axis} 
  \end{tikzpicture}
  
  \caption{Yahoo: CASH Cumulative Probability Distribution.}
\label{fig:YahooDistribution95}
\end{figure}

\end{document}